\newcommand{\ket}[1]{|#1\rangle}               
\newcommand{\bra}[1]{\langle #1|}              
\newcommand{\dyad}[2]{\ket{#1}\bra{#2}}        
\newcommand{\dya}[1]{\dyad{#1}{#1}}            
\newcommand{\ip}[2]{\langle #1|#2\rangle}      
\newcommand{\Tr}{{\rm Tr}}                     
\newcommand{\ii}{\mathrm{i}}
\newcommand{\expo}[1]{\mathrm{e}^{#1}} 
\newtheorem{theorem}{Theorem}
\newtheorem{lemma}[theorem]{Lemma}
\long\def\ca#1\cb{} 	
\newcommand{\DC}{\mathcal{D}}
\newcommand{\HC}{\mathcal{H}}
\newcommand{\SC}{\mathcal{S}}
\newcommand{\TC}{\mathcal{T}}
\begin{document}

\title{Local cloning of entangled states}

\author{Vlad Gheorghiu}
\email{vgheorgh@andrew.cmu.edu}
\affiliation{Department of Physics, Carnegie Mellon University, Pittsburgh,
Pennsylvania 15213, USA}

\author{Li Yu}
\email{liy@andrew.cmu.edu}
\affiliation{Department of Physics, Carnegie Mellon University, Pittsburgh,
Pennsylvania 15213, USA}

\author{Scott M. Cohen}
\email{cohensm@duq.edu}
\affiliation{Department of Physics, Carnegie Mellon University, Pittsburgh,
Pennsylvania 15213, USA}
\affiliation{Department of Physics, Duquesne University, Pittsburgh,
Pennsylvania 15282, USA}

\date{Version of August 11, 2010}

\begin{abstract}
We investigate the conditions under which a set $\mathcal{S}$ of pure bipartite quantum states on a $D\times D$ system can be locally cloned deterministically by separable operations, when at least one of the states is full Schmidt rank. We allow for the possibility of cloning using a resource
state that is less than maximally entangled. Our results include that: (i) all states in $\mathcal{S}$ must be full Schmidt rank and equally entangled under the $G$-concurrence measure, and (ii) the set $\mathcal{S}$ can be extended to a larger clonable set generated by a finite group $G$ of order $|G|=N$, the number of states in the larger set. It is then shown that any local cloning apparatus is capable of cloning a number of states that divides $D$ exactly. We provide a complete solution for two central problems in local cloning, giving necessary and sufficient conditions for (i) when a set of maximally entangled states can be locally cloned, valid for all $D$; and (ii) local cloning of entangled qubit states with non-vanishing entanglement. In both of these cases, we show that a maximally entangled resource is necessary and sufficient, and the states must be related to each other by local unitary ``shift" operations. These shifts are determined by the group structure, so need not be simple cyclic permutations. Assuming this shifted form and partially entangled states, then in $D=3$ we show that a maximally entangled resource is again necessary and sufficient, while for higher dimensional systems, we find that the resource state must be strictly more entangled than the states in $\mathcal{S}$. All of our necessary conditions for separable operations are also necessary conditions for LOCC, since the latter is a proper subset of the former. In
fact, all our results hold for LOCC, as our sufficient conditions are
demonstrated for LOCC, directly.
\end{abstract}

\pacs{03.67.Mn}

\maketitle

\section{Introduction\label{sct1}}
As summarized by the ``no-cloning" theorem of \cite{Nature.299.802}, any set of quantum states can be deterministically cloned if and only if the states in the set are mutually orthogonal. When the set consists of bipartite entangled states, and the cloning is restricted to local operations and classical communication (LOCC), the problem becomes much more difficult, and further restrictions have to be imposed. The mere orthogonality of the states no longer implies that they can be (locally) cloned. 

The local cloning protocol of a set of bipartite entangled states $\SC=\{\ket{\psi_i}^{AB}\}$ is schematically represented as
\begin{equation}\label{eqn1}
\ket{\psi_i}^{AB}\otimes\ket{\phi}^{ab}\longrightarrow\ket{\psi_i}^{AB}\otimes\ket{\psi_i}^{ab},\text{ }\forall i,
\end{equation}
where the letters $A,a$ label Alice's systems and $B,b$ label Bob's systems. Both parties are assumed to have access to ancillary qudits and may share a classical communication channel, so that in principle any LOCC operation can be performed. The state $\ket{\phi}$ is shared in advance between the parties, and it plays the role of a ``blank state" on which the copy of $\ket{\psi_i}$ is to be imprinted. 

The local cloning problem has recently  received a great deal of attention \cite{PhysRevA.69.052312,NewJPhys.6.164,PhysRevA.74.032108,PhysRevA.73.012343,PhysRevA.76.052305}, and was partially extended to tripartite systems in \cite{PhysRevA.76.062312}. 
The question addressed in all previous work was  which sets of states $\SC$ can be locally cloned (by LOCC) using a given blank state $\ket{\phi}$. 

Note that if one can use LOCC to transform $\ket{\phi}$ into three maximally entangled states of sufficient Schmidt rank, then the local cloning of any set of bipartite orthogonal entangled states becomes trivially possible, using teleportation: Alice uses one maximally entangled state to teleport her part of $\ket{\psi_i}$ to Bob, who then distinguishes it (i.e. learns $i$), and next communicates the result back to Alice. Now both Alice and Bob know which state was fed into the local cloning machine. Finally they transform deterministically the two remaining maximally entangled states into $\ket{\psi_i}\otimes\ket{\psi_i}$ by LOCC, which is always possible, according to \cite{PhysRevLett.83.436}. 

Another possible scenario that uses only two entangled blank states involves using LOCC to deterministically distinguish which state $\ket{\psi_i}$ was fed into the local cloning machine, which can always be done if there are only two states in the set $\SC$ \cite{PhysRevLett.85.4972}. Then, knowing the state, one can deterministically transform the two blank states into $\ket{\psi_i}\otimes\ket{\psi_i}$ (by LOCC). In this case, one needs at least two maximally entangled resource states, one for each of the two copies that must now be created, since in general the entanglement of the original state will have been destroyed in the process of distinguishing the states \cite{PhysRevA.75.052313}. 

One might hope, however, that local cloning can be performed using even less entanglement. As first shown in \cite{PhysRevA.69.052312}, this hope is sometimes correct. Any two (and not more) two-qubit Bell states can be locally cloned using only one two-qubit maximally entangled state.

This result was further extended in \cite{NewJPhys.6.164} and \cite{PhysRevA.74.032108}, which considered local cloning of maximally entangled states on higher-dimensional $D\times D$ systems using a maximally entangled resource of Schmidt rank $D$. First, necessary and sufficient conditions for the local cloning of two maximally entangled states were provided in \cite{NewJPhys.6.164}, which also proved that for $D=2$ (qubits) or $D=3$ (qutrits), any pair of maximally entangled states can be locally cloned with a maximally entangled blank state. Whenever $D$ is not prime the authors showed that there always exist pairs of maximally entangled states that cannot be locally cloned with a maximally entangled blank state. A generalization to more than 2 states but prime $D$ was given in \cite{PhysRevA.74.032108},  which showed  that a set of $D$ maximally entangled states can be locally cloned using a maximally entangled resource if and only if the states in the set are locally (cyclically) shifted
\begin{equation}\label{eqn2}
\ket{\psi_i}=\frac{1}{\sqrt{D}}\sum_{r=0}^{D-1}\ket{r}^A\ket{r\oplus i}^B,
\end{equation}
where the $\oplus$ symbol denotes addition modulo $D$.

Kay and Ericsson \cite{PhysRevA.73.012343} extended the above results to the LOCC cloning of full Schmidt rank partially entangled states using a maximally entangled blank state. They presented an explicit protocol for the local cloning of a set of $D\times D$ cyclically shifted partially entangled states
\begin{equation}\label{eqn3}
\ket{\psi_i}=\sum_{r=0}^{D-1}\sqrt{\lambda_r}\ket{r}^A\ket{r\oplus i}^B,
\end{equation}
and asserted that \eqref{eqn3} is also a necessary condition for such cloning; that the states to be cloned must be of this form. Unfortunately, the proof is not correct  
\footnote{The matter was discussed with Kay \cite{GheorghiuAlastair}. The fact that the argument is not correct can be observed after a careful reading of the paragraph following Eq. (3) in \cite{PhysRevA.73.012343}.
 The authors claim that the local cloning of partially entangled states is equivalent to the cloning of maximally entangled states, but this statement is incorrect, because the authors implicitly modified the Kraus operators that defined the local cloning, i.e. changed $A_k$ to $A_k'=A_kM_0$, where $M_0$ (defined in Eq. (3) of \cite{PhysRevA.73.012343}) is the operator that transforms the maximally entangled state ${(1/\sqrt{D})\sum_{r=0}^{D-1}\ket{r}^A\ket{r}^B}$ to the partially entangled state ${|\psi_0\rangle=\sum_{r=0}^{D-1}\sqrt{\lambda_r}\ket{r}^A\ket{r}^B}$. 
The new Kraus operators do not satisfy the closure condition anymore (necessary for a deterministic transformation), since $\sum_k {A_k'}^\dagger A_k' \otimes {B_k}^\dagger B_k=\sum_k M_0^\dagger({A_k}^\dagger A_k)M_0 \otimes {B_k}^\dagger B_k=M_0^\dagger M_0\otimes I\neq I\otimes I$, because $M_0$ is not a unitary operator (unless ${\ket{\psi_0}}$ is maximally entangled, case excluded). \\ \\ Another way of seeing that the argument is not correct is to observe that, if the $B_k$ operator performs the cloning of a maximally entangled state using a maximally entangled blank, as it is claimed, then $B_k$ must be proportional to a unitary operator, see Theorem~1(iii) of \cite{PhysRevA.76.032310} and Sec. 3.1 of \cite{NewJPhys.6.164}. It then follows that the closure condition for the Kraus operators is not satisfied, with $A_k$ as defined in Eq. (3) of \cite{PhysRevA.73.012343}.},
 and therefore finding necessary conditions when the states are partially entangled remains an open problem.

In this paper, we consider a set $\SC=\{\ket{\psi_i}^{AB}\}$ of full Schmidt rank qudit (of arbitrary dimension) partially entangled states. Actually, we will begin by considering sets $\SC$ in which only one state is required to be full Schmidt rank, and then we will see that in fact, all states in $\SC$ must be full rank. Previous work assumed the blank state $\ket{\phi}$ to be maximally entangled, but in the present article we do not impose any \emph{a priori} assumptions on $\ket{\phi}$ and find that its Schmidt rank must be at least that of the states in $\SC$. Furthermore, we do not restrict to LOCC cloning, but allow for the more general class of separable operations --- all the necessary conditions we find for separable operations will also be necessary for LOCC since the latter is a (proper) subset of the former \cite{PhysRevA.59.1070}.

The remainder of the paper is organized as follows. In the next section we give a preliminary discussion and define some terms that will be used. Then, in Sec.~\ref{sct3}, we turn to the characterization of clonable sets of states, where we show that $\ket{\phi}$ and all states in $\SC$ must be full Schmidt rank, provide additional necessary conditions on $\SC$, and then prove the group structure of these sets. From this group structure, it is then shown that the number of states in $\SC$ must divide $D$ exactly, and this is followed by a proof of a necessary (``group-shifted") condition on the local cloning of a set of $D\times D$ maximally entangled states. Then, in Sec.~\ref{sct4}, we further consider group-shifted sets, now allowed to be not maximally entangled, showing that a maximally entangled blank state is sufficient by giving an LOCC protocol that clones these states. This demonstrates that the necessary condition found in the previous section for cloning maximally entangled states is also sufficient for LOCC cloning. In Sec.~\ref{sct5}, we provide necessary conditions on the minimum entanglement in the blank. In addition, we obtain necessary and sufficient conditions for local cloning of any set when $D=2$ (entangled qubits), and for any group-shifted set for $D=3$ (entangled qutrits); in both these cases we find that the blank state must be maximally entangled, even when the states to be cloned are not. For higher dimensions with these group-shifted sets, we also show that the blank must have strictly more entanglement than the states to be cloned. Finally, Sec.~\ref{sct6} provides concluding remarks as well as some open questions. Longer proofs are presented in the Appendices.

\section{Preliminary remarks and definitions\label{sct2}}
A separable operation $\Lambda$ on a bipartite quantum system $\HC_A\otimes\HC_B$ is a transformation that can be written as
\begin{equation}
\label{eqn4}
\rho'=\Lambda(\rho)=\sum_{m=0}^{M-1}(A_m\otimes B_m)\rho(A_m\otimes B_m)^\dagger
\end{equation}
where $\rho$ is an initial density operator on the Hilbert space $\HC_A\otimes\HC_B$. The Kraus operators are arbitrary product operators satisfying the closure condition
\begin{equation}
\label{eqn5}
\sum_{m=0}^{M-1}A_m^\dagger A_m\otimes B_m^\dagger B_m=I_A\otimes I_B,
\end{equation}
with $I_A$ and $I_B$ the identity operators. The extension to multipartite systems is obvious, but here we will only consider the bipartite case. To avoid technical issues the sums in \eqref{eqn4} and \eqref{eqn5}, as well as the dimensions of $\HC_A$ and $\HC_B$, are assumed to be finite.

The local cloning protocol is described as follows.
Suppose Alice and Bob are two spatially separated parties, each holding a pair of quantum systems of dimension $D$, with Alice's systems described by a Hilbert space $\HC_A\otimes\HC_a$ and Bob's by $\HC_B\otimes\HC_b$. Let $\SC=\{\ket{\psi_i}^{AB}\}_{i=0}^{N-1}$ be a set of orthogonal bipartite entangled states on $\HC_A\otimes\HC_B$. Let $\ket{\phi}^{ab}\in\HC_a\otimes\HC_b$ be another bipartite entangled state that plays the role of a resource,  which we call the blank state, and is  shared in advance between Alice and Bob. Their goal is to implement deterministically (i.e. with probability one) the transformation
\begin{equation}
\label{eqn6}
\ket{\psi_i}^{AB}\otimes\ket{\phi}^{ab}\longrightarrow\ket{\psi_i}^{AB}\otimes\ket{\psi_i}^{ab}, \forall i=0\ldots N-1
\end{equation}
by a bipartite separable operation. Alice and Bob know exactly the states that belong to the set $\SC$ and also know the blank state $\ket{\phi}^{ab}$, but they do not know which state will be fed to the local cloning machine described by \eqref{eqn6} --- the machine has to work equally well for all states in $\SC$! Note that local cloning is defined up to local unitaries, i.e., a set $\SC=\{\ket{\psi_i}^{AB}\}_{i=0}^{N-1}$ can be locally cloned if and only if the set $\SC'=\{U^A\otimes V^B\ket{\psi_i}^{AB}\}_{i=0}^{N-1}$ can be locally cloned, where $U^A$ and $V^B$ are local unitaries. This is true because local unitaries can always be implemented deterministically at the beginning or at the end of the cloning operation.

The Schmidt coefficients of $\ket{\psi_i}^{AB}$ are labelled by $\lambda^{(i)}_r$ and by convention are sorted in decreasing order, with $\lambda^{(i)}_0 \geqslant \lambda^{(i)}_1\geqslant\cdots\geqslant\lambda^{(i)}_{D-1}$ and $\sum_{r=0}^{D-1}\lambda^{(i)}_r=1$, for all $i=0\ldots N-1$, and the Schmidt coefficients of $\ket{\phi}^{ab}$ are labelled by $\gamma_r$, with $\gamma_0\geqslant\gamma_1\cdots\geqslant\gamma_{D-1}$ and $\sum_{r=0}^{D-1}\gamma_r=1$. To remind the reader that the components of a vector $\vec{\lambda}$ are arranged in decreasing order we use the notation $\vec{\lambda}^{\downarrow}$.

The Schmidt rank of a bipartite state is the number of its non-zero Schmidt coefficients.
We say that a state of a $D\times D$ dimensional system has full Schmidt rank if its Schmidt rank is equal to $D$.

We use the concept of majorization, which is a partial ordering on $D$-dimensional real vectors. More precisely, if $\vec{x}=(x_0,\ldots,x_{D-1})$ and $\vec{y}=(y_0,\ldots,y_{D-1})$ are two real $D$-dimensional vectors, we say that $\vec{x}$ is majorized by $\vec{y}$ and write $\vec{x}\prec\vec{y}$ if and only if $\sum_{j=0}^{k}x_j^{\downarrow}\leqslant\sum_{j=0}^{k}y_j^{\downarrow}$ holds for all $k=0,\ldots, D-1$, with equality when $k=D-1$. 

For two $D\times D$ bipartite pure states ${|\chi\rangle}$ and ${|\eta\rangle}$, we use the shorthand notation ${|\chi\rangle\prec|\eta\rangle}$ to denote the fact that the vector of Schmidt coefficients of ${|\chi\rangle}$ is majorized by the vector of Schmidt coefficients of ${|\eta\rangle}$.
See \cite{PhysRevLett.83.436} or Chap. 12.5 of \cite{NielsenChuang:QuantumComputation} for more details about majorization.

The entanglement of a $D\times D$ bipartite pure state $\ket{\chi}$ can be quantified by various entanglement measures 
\footnote{Often called entanglement monotones, i.e., non-increasing under local operations and classical communication (LOCC).}, 
the ones used extensively in this paper  being the \emph{entropy of entanglement}
\begin{equation}\label{eqn7}
E(\ket{\chi})=-\sum_{r=0}^{D-1}\lambda_r\log_D{\lambda_r}
\end{equation}
and the \emph{$G$-concurrence} \cite{PhysRevA.71.012318} 
\begin{equation}\label{eqn8}
C_G(\ket{\chi})=D\left(\prod_{r=0}^{D-1}{\lambda_r}\right)^{1/D},
\end{equation}
where $\lambda_r$ denotes the $r$-th Schmidt coefficient of $\ket{\chi}$. The base $D$ in the logarithm in \eqref{eqn7} as well as the prefactor $D$ in \eqref{eqn8} appear for normalization purposes, so that the entropy of entanglement as well as the $G$-concurrence of a maximally entangled state are both 1, regardless of the dimension.

\section{Characterizing sets of clonable states\label{sct3}}
\subsection{Preliminary analysis}
Mathematically, the local cloning problem can be formulated in terms of a separable transformation on a set of pure input states $\SC=\{\ket{\psi_i}^{AB}\}_{i=0}^{N-1}$, using a blank state $\ket{\phi}^{ab}$. 

If a set of states $\SC$ can be locally cloned using the blank state $\ket{\phi}^{ab}$, then there must exist a bipartite separable operation $\Lambda$ for which 
\begin{align}
\label{eqn9}
\Lambda&(\dya{\psi_i}^{AB}\otimes\dya{\phi}^{ab})=\notag\\
&=\dya{\psi_i}^{AB}\otimes\dya{\psi_i}^{AB},\text{ }\forall i=0\ldots N-1
\end{align}
(note here that an overall phase factor in the definition of the individual states is of no significance). Since $\Lambda$ is separable, it can be represented by a set of product Kraus operators,
\begin{align}
\label{eqn10}
\sum_{m=0}^{M-1}(A_m\otimes B_m)(\dya{\psi_i}^{AB}\otimes\dya{\phi}^{ab})(A_m\otimes B_m)^{\dagger}\notag\\
=\dya{\psi_i}^{AB}\otimes\dya{\psi_i}^{AB},\text{ }\forall i=0\ldots N-1,
\end{align}
where operators $A_m$ act on $\HC_A\otimes\HC_a$, and $B_m$ on $\HC_B\otimes\HC_b$. The above equation is equivalent to
\begin{align}
\label{eqn11}
A_m&\otimes B_m(\ket{\psi_i}^{AB}\otimes\ket{\phi}^{ab})=\sqrt{p_{mi}}\mathrm{e}^{\mathrm{i}\varphi_{mi}}(\ket{\psi_i}^{AB}\otimes\ket{\psi_i}^{ab}),\notag\\
&\forall i=0\ldots N-1, \text{ }\forall m=0\ldots M-1
\end{align}
where $\mathrm{e}^{\mathrm{i}\varphi_{mi}}$ is a complex phase that may depend on $m$ and $i$, and $p_{mi}$ are probabilities for which
\begin{equation}
\label{eqn12}
\sum_{m=0}^{M-1} p_{mi}=1,\text{ }\forall i=0\ldots N-1.
\end{equation}

By map-state duality in the computational basis
\footnote{As an example of map-state duality, a bipartite state ${\ket{\chi}^{AB}\in\HC_A\otimes\HC_B}$, ${\ket{\chi}^{AB}=\sum c_{ij}\ket{i}^{A}\ket{j}^{B}}$, is transformed into a map ${\chi:\HC_B\longrightarrow\HC_A}$, ${\chi=\sum c_{ij}\ket{i}^{A}\bra{j}^{B}}$. Note that the rank of the operator ${\chi}$ is the Schmidt rank of ${\ket{\chi}^{AB}}$, and the squares of the singular values of ${\chi}$ (or, equivalently, the eigenvalues of ${\chi\chi^\dagger}$) are the Schmidt coefficients of ${\ket{\chi}^{AB}}$. For more details about map-state duality see Sec. II of \cite{PhysRevA.76.032310}.}
\cite{OSID.11.3,PhysRevA.73.052309,PhysRevA.76.032310,PhysRevA.78.020304} one can rewrite \eqref{eqn11} as
\begin{equation}
\label{eqn13}
A_m(\psi_i\otimes\phi)B^T_m=\sqrt{p_{mi}}\mathrm{e}^{\mathrm{i}\varphi_{mi}}\psi_i\otimes\psi_i, \text{ }\forall i,m,
\end{equation}
where $\psi_i$ and $\phi$ are now operators obtained from the corresponding kets by turning a ket into a bra, and $B^T_m$ is the transpose of $B_m$.

The superscripts in \eqref{eqn13} that label the Hilbert spaces have been dropped for clarity, since now one can regard everything as abstract linear operators, or matrices in the computational basis. 
Although map-state duality is basis-dependent, our results will not depend on the choice of a specific basis. 

We now state our first result characterizing sets of states $\SC$ that can be locally cloned.
\begin{theorem}[Rank of states in $\SC$]\label{thm1}
Let $\SC=\{\ket{\psi_i}^{AB}\}_{i=0}^{N-1}$ be a set of bipartite orthogonal states on $\HC_A\otimes\HC_B$ with one state, say $\ket{\psi_0}$, having full Schmidt rank. If the local cloning of $\SC$ is possible by a separable operation using a blank state $\ket{\phi}$, then $\ket{\phi}$ and all states in $\SC$ must be full rank.
\end{theorem}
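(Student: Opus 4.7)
The plan is to exploit the map--state duality form of the cloning equation, namely \eqref{eqn13}, and compare the ranks of the two sides for a well-chosen Kraus index $m$. Recall that under the duality the rank of the operator $\psi_i$ equals the Schmidt rank of the corresponding state $\ket{\psi_i}^{AB}$, and likewise for $\phi$. Since $\ket{\psi_0}$ is assumed to have full Schmidt rank, the operator $\psi_0$ is invertible. The operators $A_m$ and $B_m$ act on $\HC_A\otimes\HC_a$ and $\HC_B\otimes\HC_b$ respectively, both of dimension $D^2$.

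First I would fix a particular $m$ for which $p_{m,0}>0$; such an $m$ exists by \eqref{eqn12}. For this $m$, equation \eqref{eqn13} with $i=0$ reads
\begin{equation*}
A_m(\psi_0\otimes\phi)B_m^T=\sqrt{p_{m,0}}\,\expoi{\varphi_{m,0}}\,\psi_0\otimes\psi_0.
\end{equation*}
The right-hand side has rank $D^2$ because $\psi_0$ has rank $D$. The rank of the left-hand side is at most $\min(\mathrm{rank}(A_m),\mathrm{rank}(\psi_0\otimes\phi),\mathrm{rank}(B_m^T))$, and $\mathrm{rank}(\psi_0\otimes\phi)=D\cdot\mathrm{rank}(\phi)$. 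Forcing this to equal $D^2$ gives $\mathrm{rank}(\phi)=D$, so the blank state $\ket{\phi}$ has full Schmidt rank, and simultaneously $A_m$ and $B_m$ must each have rank $D^2$, hence are invertible.

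Next, for this same $m$ and any $i\neq 0$, \eqref{eqn13} gives
\begin{equation*}
A_m(\psi_i\otimes\phi)B_m^T=\sqrt{p_{mi}}\,\expoi{\varphi_{mi}}\,\psi_i\otimes\psi_i.
\end{equation*}
Because $A_m$, $B_m$, and $\phi$ are all invertible, the left-hand side has rank exactly $D\cdot\mathrm{rank}(\psi_i)$, which is strictly positive (since $\ket{\psi_i}$ is a nonzero state). Consequently $p_{mi}$ cannot vanish, and the right-hand side has rank $(\mathrm{rank}(\psi_i))^2$. Equating the two ranks yields $\mathrm{rank}(\psi_i)=D$, so every $\ket{\psi_i}$ has full Schmidt rank, as claimed.

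No step here looks genuinely hard; the only place where one must be a little careful is the bookkeeping in the second paragraph, to make sure we single out one specific index $m$ (the same $m$ throughout) and do not conflate ``the full-rank conclusion for $\phi$'' with a statement about all $m$. The remainder is just the elementary inequality $\mathrm{rank}(XYZ)\leqslant\min(\mathrm{rank}(X),\mathrm{rank}(Y),\mathrm{rank}(Z))$ applied twice, together with multiplicativity of rank under tensor products.
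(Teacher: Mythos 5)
Your proposal is correct and follows essentially the same route as the paper's own proof: pick an $m$ with $p_{m0}>0$, use the rank of the product in \eqref{eqn13} at $i=0$ to force $\phi$, $A_m$, $B_m$ to be full rank, then compare ranks at $i\neq 0$ to conclude $\mathrm{rank}(\psi_i)=D$. No gaps.
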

\begin{proof}
This result follows directly from \eqref{eqn13}. 
Given that $\ket{\psi_0}$ has full Schmidt rank, then $\psi_0$ is a full rank
operator. Since the rank of a tensor product is the product of ranks,
$\psi_0\otimes\psi_0$ is a full rank operator. From \eqref{eqn12}, there
must be an $m$ such that $p_{m0}>0$, then for this $m$ and for $i=0$ the
right-hand side of \eqref{eqn13} is a full rank operator, thus the left-hand
side is also full rank. Then, since a product of operators cannot
have rank exceeding that of any of the individual operators in the product,
$\psi_0\otimes\phi$ is full rank, as are $A_m$ and $B_m^T$ for this $m$.
$\psi_0\otimes\phi$ being full rank implies that $\phi$ is full rank.
Now for $\forall i\ne 0$, the left-hand side of \eqref{eqn13} has rank
$D\times\mbox{rank}(\psi_i)$ as multiplying by the full rank operators
$A_m$ and $B_m^T$ do not change the rank. In addition, $D\times\mbox{rank}(\psi_i)$
is always non-zero, as \mbox{$\mbox{rank}(\psi_i)\geqslant 1$}, thus $p_{mi}\ne 0$ for this $m$,
otherwise the right-hand side of \eqref{eqn13} would have zero rank.
Then the right-hand side of \eqref{eqn13} is of rank $[\mbox
{rank}(\psi_i)]^2$, so $\mbox{rank}(\psi_i)=D,\,\,\forall i$, and we are done.
\end{proof}
In this paper, we are considering sets $\SC$ in which at least one state is full rank. Therefore by this theorem, we may instead restrict to sets in which every state is full rank, and we will do so throughout the remainder of the paper.

As just argued in the proof of the previous theorem, for $m$ such that $p_{m0}>0$ all operators in \eqref{eqn13} are full rank, hence invertible. From now on we will only consider those $m$ such that $p_{m0}>0$. Now take the inverse of \eqref{eqn13}, replace $i$ by $j$, and right multiply \eqref{eqn13} by it to obtain
\begin{equation}
\label{eqn14}
A_m(\psi_i\psi_j^{-1}\otimes I)A_m^{-1}=\sqrt{\frac{p_{mi}}{p_{mj}}}\mathrm{e}^{\mathrm{i}(\varphi_{mi}-\varphi_{mj})}(\psi_i\psi_j^{-1}\otimes \psi_i\psi_j^{-1}).
\end{equation}
Define 
\begin{equation}\label{eqn15}
T_{ij}^{(m)}=\sqrt{\frac{p_{mi}}{p_{mj}}}\mathrm{e}^{\mathrm{i}(\varphi_{mi}-\varphi_{mj})}\psi_i\psi_j^{-1}
\end{equation}
for those $m$ for which $p_{m0}>0$. Then \eqref{eqn14} can be written more compactly as
\begin{equation}
\label{eqn16}
A_m(T_{ij}^{(m)}\otimes I)A_m^{-1}=T_{ij}^{(m)}\otimes T_{ij}^{(m)}.
\end{equation}
Since for every $i$, $\psi_i$ is full rank, we see that $\det(\psi_i)\ne0$, so $\det(T_{ij}^{(m)})$ is also non-vanishing. Thus, taking the determinant on both sides of \eqref{eqn16} yields
\begin{equation}
\label{eqn17}
\det(T_{ij}^{(m)})^D=1,
\end{equation}
where we have used the fact that $\det(A\otimes B)=\det(A)^M\det(B)^N$, for $A$ and $B$ being $N\times N$ and $M\times M$ matrices, respectively.
Recalling the definition of $T_{ij}^{(m)}$ in \eqref{eqn15}, this condition becomes
\begin{equation}
\label{eqn18}
1=\left(\frac{p_{mi}}{p_{mj}}\right)^{D/2} \left|\frac{\det(\psi_i)}{\det(\psi_j)}\right|,
\end{equation}
or
\begin{equation}
\label{eqn19}
p_{mj}=p_{mi} \left|\frac{\det(\psi_i)}{\det(\psi_j)}\right|^{2/D}.
\end{equation}
Summing \eqref{eqn19} over $m$ yields
\begin{equation}
\label{eqn20}
\left|\det(\psi_i)\right|=\left|\det(\psi_j)\right|,
\end{equation}
implying
\begin{equation}
\label{eqn21}
p_{mi}=p_{mj},
\end{equation}
hence these determinants and probabilities are independent of the input state. As a consequence, we may write $T_{ij}^{(m)}$ in the simpler form,
\begin{equation}
\label{eqn22}
T_{ij}^{(m)} = \mathrm{e}^{\mathrm{i}(\varphi_{mi}-\varphi_{mj})}\psi_i\psi_j^{-1}.
\end{equation}

\textbf{Observation:}
The fact that $p_{mi}=p_{m}$, independent of $i$, implies that the cloning apparatus provides no information whatsoever about which state was input to that apparatus, nor can any such information ``leak" to an external environment that might be used to implement the local cloning separable operation. This is not without interest, since it rules out the possibility of local cloning by locally distinguishing while preserving entanglement \cite{PhysRevA.75.052313}. This result turns out to be valid in the much more general setting of one-to-one transformation of full Schmidt rank pure state ensembles by separable operations, but a discussion of these broader implications  will be presented in a future publication.

We can now provide additional conditions that must hold in order for $\SC$ to be a set of states that can be locally cloned by separable operations. These are stated in the following theorem, which holds under completely general conditions, applicable for any $N$ and $D$.
\begin{theorem}[Necessary conditions]\label{thm2}
Let $\SC=\{\ket{\psi_i}^{AB}\}$ be a set of full Schmidt rank bipartite orthogonal entangled states on $\HC_A\otimes\HC_B$. If the local cloning of $\SC$ using a blank state $\ket{\phi}^{ab}\in\HC_a\otimes\HC_b$ is possible by a separable operation, then the following must hold:
\begin{itemize}
\item[i)] 
All states in $\SC$ are equally entangled with respect to the $G$-concurrence measure,
\begin{equation}
\label{eqn23}
C_G(\ket{\psi_i}^{AB})=C_G(\ket{\psi_j}^{AB}),\text{ }\forall i,j.
\end{equation}

\item[ii)] Any two states in $\SC$ must either share the same set of Schmidt coefficients or  be incomparable under majorization.

\item[iii)] 
\begin{equation}\label{eqn24}
\mathrm{Spec}(T_{ij}^{(m)}\otimes I)=\mathrm{Spec}(T_{ij}^{(m)}\otimes T_{ij}^{(m)}),\text{ }\forall i,j,
\end{equation}
where Spec$(\cdot)$ denotes the spectrum of its argument and $T_{ij}^{(m)}$ is defined as in \eqref{eqn22}.
\end{itemize}
\end{theorem}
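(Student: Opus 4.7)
The plan is to prove the three parts essentially in reverse order of difficulty, using as black boxes equation \eqref{eqn16}, the determinant identity \eqref{eqn20}, and the invertibility of $A_m$ (for $m$ with $p_{m0}>0$) already established in the preceding analysis. Part (iii) is immediate from \eqref{eqn16}: because $A_m$ is invertible, that equation exhibits $T_{ij}^{(m)}\otimes I$ and $T_{ij}^{(m)}\otimes T_{ij}^{(m)}$ as similar operators, and similar operators share the same spectrum. No additional work is needed.

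For part (i), I would appeal to the standard identity linking Schmidt coefficients and determinants under map-state duality. For a full Schmidt rank $\ket{\psi_i}$, the squares of the singular values of $\psi_i$ are exactly its Schmidt coefficients, so $|\det(\psi_i)|^2=\det(\psi_i\psi_i^\dagger)=\prod_{r=0}^{D-1}\lambda_r^{(i)}$. Combining this with the determinant equality \eqref{eqn20} gives $\prod_r\lambda_r^{(i)}=\prod_r\lambda_r^{(j)}$, and the definition \eqref{eqn8} of the $G$-concurrence then yields $C_G(\ket{\psi_i})=C_G(\ket{\psi_j})$.

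For part (ii), I would argue from part (i) together with strict Schur concavity. Assume $\ket{\psi_i}$ and $\ket{\psi_j}$ are comparable under majorization; without loss of generality, $\vec\lambda^{(i)}\prec\vec\lambda^{(j)}$. The function $\vec x\mapsto\sum_r\log x_r$ is strictly concave and symmetric on the open positive simplex, hence strictly Schur-concave there; equivalently, $\vec x\mapsto\prod_r x_r$ is strictly Schur-concave on vectors with all-positive entries. By Theorem~\ref{thm1}, all Schmidt coefficients are strictly positive, so we are inside the open simplex. Part (i) gives $\prod_r\lambda_r^{(i)}=\prod_r\lambda_r^{(j)}$, and strict Schur concavity then forces $\vec\lambda^{(i)}$ to be a permutation of $\vec\lambda^{(j)}$; since both vectors are listed in decreasing order, they coincide.

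The only genuinely subtle step is the invocation of strict Schur concavity in part (ii), and it is exactly there that the full-rank conclusion of Theorem~\ref{thm1} is essential: on the boundary of the simplex the product of coordinates is only weakly Schur-concave (two vectors with a zero entry would both give product zero without being equal), and the strict version would fail. The remainder is essentially direct algebraic manipulation of relations already displayed.
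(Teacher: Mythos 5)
Your proof is correct. Parts (i) and (iii) coincide with the paper's own argument: part (iii) reads the similarity relation \eqref{eqn16} as an equality of spectra, and part (i) combines the determinant equality \eqref{eqn20} with the identity $|\det(\psi_i)|^2=\prod_r\lambda^{(i)}_r$ and the definition \eqref{eqn8}. The only divergence is in part (ii): the paper simply cites Theorem 1 (ii,iii) of Ref.~\cite{PhysRevA.76.032310}, which asserts that two states comparable under majorization and of equal $G$-concurrence must share the same set of Schmidt coefficients, whereas you prove that lemma from scratch via the strict Schur-concavity of $\vec x\mapsto\prod_r x_r$ on the open positive simplex (equivalently, of $\sum_r\log x_r$). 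Your route is self-contained and has the virtue of making explicit exactly where the full-rank conclusion of Theorem~\ref{thm1} enters --- on the boundary of the simplex the product is only weakly Schur-concave, and two distinct comparable vectors with a vanishing entry would both give product zero. What the citation buys the paper is brevity; what your argument buys is transparency, at the modest cost of invoking the standard fact that a symmetric sum of strictly concave functions of the coordinates is strictly Schur-concave.
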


\begin{proof}
Proof of i) This follows at once from \eqref{eqn20}, the definition \eqref{eqn8} of $G$-concurrence, and the fact that for any state $\ket{\chi}$ the product of its Schmidt coefficients is equal to $|\det(\chi)|^2$.

Proof of ii) The proof follows from Theorem 1 (ii,iii) of \cite{PhysRevA.76.032310} which states that any two bipartite states $\ket{\chi}$ and $\ket{\eta}$ that are comparable under majorization (i.e. $\ket{\chi}\prec\ket{\eta}$ or $\ket{\eta}\prec\ket{\chi}$) and have equal $G$-concurrence must share the same set of Schmidt coefficients.

Proof of iii) The proof follows at once from \eqref{eqn16}.
\end{proof}

\subsection{Characterization of clonable sets in terms of finite groups}
\label{sec:group}
We next show that to any set $\SC$ of states that can all be cloned by the same apparatus, there can be associated a finite group, and the set is essentially generated by this group.
\begin{theorem}[Group structure of $\SC$]\label{thm3}
Let $\SC=\{\ket{\psi_i}^{AB}\}$ be a set of full Schmidt rank bipartite orthogonal entangled states on $\HC_A\otimes\HC_B$. If the local cloning of $\SC$ is possible by a separable operation, then the set $\SC$ can be extended to a larger set such that $\{T_{ij}^{(m)}\}$ of  \eqref{eqn22} for fixed $j,m$ constitutes an ordinary representation of a finite group, $G$. Since the states in $\SC$ are related as $\mathrm{e}^{\mathrm{i}\varphi_{mi}}|\psi_i\rangle=\mathrm{e}^{\mathrm{i}\varphi_{mj}}(T_{ij}^{(m)}\otimes I_B)|\psi_j\rangle$, then the larger set, with $N=|G|$ members, is generated by the action of the group $G$ on any individual state in the set.
\end{theorem}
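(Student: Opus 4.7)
The plan is to realize the extended clonable set as the orbit of $\ket{\psi_j}$ under a finite matrix group built from the $T_{ij}^{(m)}$'s, with $m$ fixed such that $p_{m0}>0$ and $j$ fixed throughout. First I would check that the set $\TC:=\{T\in GL(\HC_A):A_m(T\otimes I)A_m^{-1}=T\otimes T\}$ is a subgroup of $GL(\HC_A)$: closure under products follows by composing two instances of the defining relation, and closure under inverses by inverting both sides of it. By \eqref{eqn16}, every $T_{ij}^{(m)}$ lies in $\TC$; let $G\subseteq\TC$ be the subgroup generated by $\{T_{ij}^{(m)}:i=0,\ldots,N-1\}$.

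Next I would show that the same apparatus clones the entire $G$-orbit of $\ket{\psi_j}$. For each $T\in\TC$ set $\ket{\chi_T}:=(T\otimes I_B)\ket{\psi_j}$. Using the intertwining $A_m(T\otimes I)=(T\otimes T)A_m$ together with \eqref{eqn13} at $i=j$, a short map-state computation gives $A_m(T\psi_j\otimes\phi)B_m^T=(T\otimes T)A_m(\psi_j\otimes\phi)B_m^T=\sqrt{p_{mj}}\,e^{\ii\varphi_{mj}}(T\psi_j)\otimes(T\psi_j)$, which back in state language reads $(A_m\otimes B_m)(\ket{\chi_T}\otimes\ket{\phi})=\sqrt{p_{mj}}\,e^{\ii\varphi_{mj}}\ket{\chi_T}\otimes\ket{\chi_T}$. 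Applying the Kraus closure condition to $\ket{\chi_T}\otimes\ket{\phi}$ and using $\sum_m p_{mj}=1$ then forces $\|\ket{\chi_T}\|^4=\|\ket{\chi_T}\|^2$, so every orbit state is a unit vector.

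The heart of the argument is finiteness. Because one deterministic separable operation clones every member of $\OC:=\{\ket{\chi_T}:T\in G\}$, the no-cloning theorem forces any two non-collinear orbit states to be orthogonal; since they all live in the $D^2$-dimensional space $\HC_A\otimes\HC_B$, $\OC$ contains at most $D^2$ distinct rays and is therefore finite. To upgrade this to finiteness of $G$ itself, I would note that the stabilizer in $G$ of $\ket{\psi_j}$ up to phase is exactly the subgroup of scalar matrices in $G$ (invertibility of $\psi_j$ turns $T\psi_j=c\psi_j$ into $T=cI$), and that any scalar $cI\in\TC$ must satisfy $\det(cI)^D=c^{D^2}=1$ because the derivation of \eqref{eqn17} applies verbatim to every element of $\TC$. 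Finitely many scalars together with a finite orbit give $|G|<\infty$.

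Finally I would take the extended set to be $\SC':=\OC$. By \eqref{eqn22} and \eqref{eqn11}, $(T_{ij}^{(m)}\otimes I_B)\ket{\psi_j}$ is a phase multiple of $\ket{\psi_i}$, so $\SC\subseteq\SC'$ up to phase; after relabeling the members of $\SC'$ by $i=0,\ldots,|G|-1$, the $T_{ij}^{(m)}$ attached to this enlarged set via \eqref{eqn22} exhaust $G$, and the inclusion $G\hookrightarrow GL(\HC_A)$ is the desired ordinary representation generating $\SC'$ from $\ket{\psi_j}$. The main obstacle I anticipate is making this last identification rigorous: one must consistently fix the scalar gauge of the $T_{ij}^{(m)}$'s, and hence the phase factors $e^{\ii\varphi_{mi}}$, so that $g\mapsto T_g$ is multiplicative on the nose rather than only projectively, which is precisely what the finiteness of the scalar subgroup and the constraint $\det(T)^D=1$ ultimately enable.
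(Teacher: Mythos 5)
Your proposal is correct and rests on the same two pillars as the paper's own proof: the intertwining relation \eqref{eqn16} shows that the entire orbit of $\ket{\psi_j}$ under the $T$'s is cloned by the same apparatus, and no-cloning plus finite dimensionality then bounds that orbit. Where you genuinely differ is in how the group structure and the count $N=|G|$ are obtained. The paper works state-by-state: it adjoins the triple products $\psi_i\psi_j^{-1}\psi_k$ until the set is maximal, reads off the multiplication law $T_{ij}^{(m)}T_{kn}^{(m)}=T_{ln}^{(m)}$ from \eqref{eqn27}, verifies identity and inverses by hand, and then proves $|G|=N$ by showing that the $T_{ij}^{(m)}$ with fixed $j$ are linearly independent (a consequence of $\Tr(\psi_k^\dagger\psi_j)=\delta_{jk}$). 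You instead observe that the defining relation of \eqref{eqn16} carves out a subgroup $\TC$ of $GL(\HC_A)$, so the group axioms come for free, and you extract the count from orbit--stabilizer; this is a cleaner logical skeleton, but it leaves the exact equality $N=|G|$ hanging on the ``gauge-fixing'' worry you flag at the end. That worry is dispatched in one line: for a scalar $cI\in\TC$ the left side of the defining relation is $cI$ while the right side is $c^2I$, so $c=1$; hence $\TC$ contains no nontrivial scalars, the stabilizer of the ray through $\ket{\psi_j}$ is trivial, the orbit has exactly $|G|$ members, and the operators $(T\psi_j)\psi_j^{-1}=T$ attached to the enlarged set are literally the elements of $G$ with no residual phase ambiguity (your weaker bound $c^{D^2}=1$ gives finiteness but not this exact identification, which the theorem's statement requires). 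One caveat you share with the paper: the branch relation is derived only for those $m$ with $p_m>0$, so the assertion that the new orbit states are deterministically cloned --- and hence normalized and mutually orthogonal --- tacitly assumes the Kraus terms with $p_m=0$ annihilate the new inputs as well; neither your argument nor the paper's addresses this point.
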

\begin{proof}
The starting point of the proof is to multiply \eqref{eqn16} on the left of \eqref{eqn13} (with index $k$) to obtain
\begin{equation}
\label{eqn25}
A_m(T_{ij}^{(m)}\psi_k\otimes\phi)B^T_m=\sqrt{p_{m}}\mathrm{e}^{\mathrm{i}\varphi_{mk}}T_{ij}^{(m)}\psi_k\otimes T_{ij}^{(m)}\psi_k. 
\end{equation}
Using \eqref{eqn22} this becomes
\begin{align}
\label{eqn26}
A_m(\psi_i&\psi_j^{-1}\psi_k\otimes\phi)B^T_m\notag\\
&=\sqrt{p_{m}}\mathrm{e}^{\mathrm{i}(\varphi_{mi}-\varphi_{mj}+\varphi_{mk})}\psi_i\psi_j^{-1}\psi_k\otimes \psi_i\psi_j^{-1}\psi_k, 
\end{align}
which by map-state duality implies that the state $\ket{\psi_i\psi_j^{-1}\psi_k}$ is cloned by the same apparatus as all the states in the original set $\SC$. Therefore $\ket{\psi_i\psi_j^{-1}\psi_k}$ --- which, by considering the version of \eqref{eqn26} that corresponds to states (as in \eqref{eqn11}), taking the squared norm of both sides and summing over $m$, is seen to be normalized --- must either (i) be orthogonal to the entire set $\SC$, or (ii) it is equal to one of those original states up to an overall phase factor. If this state is orthogonal to $\SC$, then $\SC$ can be extended by including this state as one of its members. So assume $\SC$ has been extended to its maximal size (since we are working in finite dimensions, this size will be finite), and then we can conclude that for every $i,j,k$,
\begin{equation}
\label{eqn27}
\psi_i\psi_j^{-1}\psi_k=\mathrm{e}^{\mathrm{i}(\varphi_{ml}-\varphi_{mi}+\varphi_{mj}-\varphi_{mk})}\psi_l,
\end{equation}
for some $l$, where the phase in the above expression has been determined by comparing \eqref{eqn26} to \eqref{eqn13}. Next multiply this latter expression on the right by $\mathrm{e}^{-\mathrm{i}\varphi_{mn}}\psi_n^{-1}$ to obtain
\begin{equation}
\label{eqn28}
T_{ij}^{(m)}T_{kn}^{(m)}=T_{ln}^{(m)}.
\end{equation}
Hence the collection of $T_{ij}^{(m)}$ is closed under matrix multiplication, which is associative. In addition, $T_{ii}^{(m)}=I$ for every $i$ and $T_{ij}^{(m)}T_{ji}^{(m)}=I$ for every $i,j$, so we see that the identity element and inverses are present, which concludes the proof that the set $\{T_{ij}^{(m)}\}$ with fixed $m$ form a ordinary representation of a group, $G$. Now, the number of index pairs $(i,j)$ is $N^2$, where $N$ is the number of states in $\SC$. However, we will now show that in fact the order $|G|$ of this group is equal to $N$ and not $N^2$.

Setting $n=j$ in \eqref{eqn28}, we have
\begin{equation}
\label{eqn29}
T^{(m)}_{ij}T^{(m)}_{kj}=T^{(m)}_{lj},
\end{equation}
so the product is closed even when the second index is constrained to be the same. If we set $l=j$, we see that with $T^{(m)}_{jj}=I$, then for each $i$ there exists $k$ such that $T^{(m)}_{kj}=(T^{(m)}_{ij})^{-1}$. Hence, for every fixed $j$ the set $\TC_j=\{T^{(m)}_{ij}\}$ also is a representation of $G$. Similarly, one can show the same holds if instead it is the first index that is held fixed. 
Note now that by multiplying \eqref{eqn28} on the right by $(T_{kn}^{(m)})^{-1}$,
and given that \eqref{eqn28} holds for any $i,j,k,n$, we see that for every $i,j$, $T_{ij}$ is a member of the group formed by the $T_{kn}$ for fixed $n$. That is, the group of the $T_{kn}$ for fixed $n$ contains all elements $T_{ij}$.

Could two or more of the $T^{(m)}_{ij}$ be equal, for fixed $j$? We will now show this is not the case by demonstrating the linearly independence of the set $\TC_j$. Indeed,
\begin{align}
\label{eqn30}
0&=\sum_{k=0}^{N-1}c_kT^{(m)}_{kj}=\sum_{k=0}^{N-1}c_k\mathrm{e}^{\mathrm{i}(\varphi_{mk}-\varphi_{mj})}\psi_k\psi_j^{-1}\notag\\
&\Longleftrightarrow0=\sum_{k=0}^{N-1}c_k\mathrm{e}^{\mathrm{i}\varphi_{mk}}\psi_k.
\end{align}
However, the $\psi_k$ are mutually orthogonal, $\Tr(\psi_k^\dagger\psi_j)=\delta_{jk}$, so this can only be satisfied if all the $c_k$ vanish, implying that $\TC_j$ is linearly independent, and hence, that $|G|=N$: the (maximal) number of states in $\SC$ is equal to the order of $G$.
\end{proof}
For the remainder of the paper, we will use labels $f,g,h$ instead of $i,j,k$, where the former represent elements of the group $G$; the group multiplication is denoted as $fg$, with $e$ the identity element. For example, instead of $\psi_0$ we will now write $\psi_e$, and in place of $T^{(m)}_{j0}$ we will simply write $T^{(m)}_f$.

We may now utilize the powerful tools of group theory to study sets $\SC$ of clonable states, obtaining a very strong constraint on how many states any given apparatus can possibly clone. Any group $G$ is characterized by its irreducible representations, which we denote as $\Gamma^{(\alpha)}(f),~f\in G$, and any representation of $G$ may be decomposed into a direct sum of irreducible representations with a given irreducible representation $\Gamma^{(\alpha)}(f)$ appearing some number $n_\alpha$ times in that sum. In general, a given representation may have $n_\alpha=0$ for some $\alpha$, but since here our representation is linearly independent, we know that every irreducible representation must appear at least once \cite{PhysRevA.81.062315}.

We can use character theory \cite{Hamermesh:GroupTheory} to calculate $n_\alpha$. Defining characters as $\chi(T^{(m)}_{f})=\Tr(T^{(m)}_{f})$ and $\chi^{(\alpha)}(f)=\Tr(\Gamma^{(\alpha)}(f))$, we have that
\begin{equation}
\label{eqn31}
n_\alpha=\frac{1}{|G|}\sum_{f\in G}\chi^{(\alpha)}(f)^\ast\chi(T^{(m)}_{f}).
\end{equation}
However, by taking the trace of \eqref{eqn16} and recalling that the trace of a tensor product is equal to the product of the traces, we see that $\chi(T^{(m)}_{f})$ is equal to either $0$ or $D$. 
Since every invertible representation of a finite group is equivalent
to a unitary representation, the eigenvalues of our representation
matrices $T_f^{(m)}$ all have magnitude one. Hence $\chi(T_f^{(m)})=D$ if
and only if all eigenvalues of $T_f^{(m)}$ are equal to 1, in which case
we have that $T_f^{(m)}=I$ because $T_f^{(m)}$ is similar to a unitary
matrix and therefore diagonalizable. However, $T_f^{(m)}=I$ is
equivalent to $f=e$, since $T_f^{(m)}=\mathrm{e}^{\mathrm{i}(\varphi_{mf}-\varphi_{me})}\psi_f
\psi_e^{-1}$. Hence, we may conclude that $\chi(T_f^{(m)})$ vanishes
except when $f=e$, in which case $\chi(T_e^{(m)})=D$.
Thus, \eqref{eqn31} reduces to
\begin{equation}
\label{eqn32}
n_\alpha=\frac{Dd_\alpha}{|G|},
\end{equation}
where $d_\alpha=\chi^{(\alpha)}(e)$ is the dimension of the $\alpha^{\textrm{th}}$ irreducible representation. Since for every ordinary representation of a finite group there is always the trivial irreducible representation of all ones, $\Gamma^{(t)}(f)=1, \forall {f\in G}$, where this irreducible representation has dimension $d_t=1$, we have immediately that $n_t=D/|G|$ is an integer, implying that $N=|G|$ divides $D$. Thus,
\begin{theorem}[Number of clonable states]\label{thm4}
If an apparatus can locally clone more than one state on a $D\times D$ system, where at least one (and therefore all, see Theorem~\ref{thm1}) of these states has full Schmidt rank, then that apparatus can in fact clone a number of states that divides $D$ exactly. In particular if $D$ is prime, then any such apparatus can clone exactly $D$ states, no more and no less.
\end{theorem}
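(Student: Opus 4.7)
My plan is to lean almost entirely on the group-theoretic scaffolding already built up in Theorem~\ref{thm3}. By that theorem, once we have maximally extended $\SC$, the operators $T_f^{(m)}$ (for fixed $m$ with $p_{m0}>0$) form a faithful $D$-dimensional matrix representation of a finite group $G$ with $|G|=N$. The strategy is to apply character theory to this representation and read off a divisibility condition from the multiplicity of the trivial irreducible representation.

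First I would decompose $\{T_f^{(m)}\}$ into irreducibles $\Gamma^{(\alpha)}$ with multiplicities $n_\alpha$, so that $D=\sum_\alpha n_\alpha d_\alpha$ with $d_\alpha=\dim\Gamma^{(\alpha)}$. Standard character orthogonality gives
\begin{equation}
n_\alpha=\frac{1}{|G|}\sum_{f\in G}\chi^{(\alpha)}(f)^*\,\chi(T_f^{(m)}).
\end{equation}
To evaluate the right-hand side I would pin down $\chi(T_f^{(m)})$ by taking the trace of the conjugation identity \eqref{eqn16}: since $\Tr(T_f^{(m)}\otimes I)=D\,\chi(T_f^{(m)})$ on the left and $\Tr(T_f^{(m)}\otimes T_f^{(m)})=\chi(T_f^{(m)})^2$ on the right, the character satisfies $\chi(T_f^{(m)})^2=D\,\chi(T_f^{(m)})$, so $\chi(T_f^{(m)})\in\{0,D\}$. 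The value $D$ is achieved only when every eigenvalue of $T_f^{(m)}$ is $1$; since $T_f^{(m)}$ is similar to a unitary (hence diagonalizable) and the representation is faithful, this forces $f=e$.

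With $\chi(T_f^{(m)})=D\,\delta_{f,e}$, the sum collapses and
\begin{equation}
n_\alpha=\frac{D\,d_\alpha}{|G|}.
\end{equation}
Applying this to the trivial one-dimensional representation $\Gamma^{(t)}$, where $d_t=1$, yields $n_t=D/|G|$. Because $n_t$ is a nonnegative integer (and is in fact at least $1$, since linear independence of $\{T_f^{(m)}\}$ guarantees every irrep appears), $|G|=N$ must divide $D$ exactly. The prime case is immediate: $N>1$ and $N\mid D$ with $D$ prime force $N=D$.

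The only step I expect to require any care is the argument that $\chi(T_f^{(m)})=D$ implies $f=e$; the rest is bookkeeping. The subtlety there is that we only know $T_f^{(m)}$ is invertible and similar to a unitary (not a priori unitary), so I must invoke diagonalizability to conclude that all eigenvalues being $1$ forces $T_f^{(m)}=I$, and then use the explicit form $T_f^{(m)}=\mathrm{e}^{\mathrm{i}(\varphi_{mf}-\varphi_{me})}\psi_f\psi_e^{-1}$ together with the orthogonality of the $\psi_g$ (which underlies faithfulness of the representation, as already established in the proof of Theorem~\ref{thm3}) to conclude $f=e$.
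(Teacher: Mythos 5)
Your proposal is correct and follows essentially the same route as the paper: trace \eqref{eqn16} to get $\chi(T_f^{(m)})^2=D\,\chi(T_f^{(m)})$, use similarity to a unitary plus faithfulness to conclude $\chi(T_f^{(m)})=D\,\delta_{f,e}$, and then read off $n_t=D/|G|\in\mathbb{Z}$ from the multiplicity of the trivial irreducible representation. No gaps; this matches the paper's argument step for step.
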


Now we see from \eqref{eqn32} that $n_\alpha$ is an integer multiple of $d_\alpha$. If $|G|=D$ so that $n_\alpha=d_\alpha$, we have what is known as the regular representation of $G$. Otherwise, our representation is a direct sum of an integer number $n_t=D/|G|$ of copies of the regular representation. As is well known, there is always a choice of basis in which the matrices in a \textit{unitary} regular representation appear as permutation matrices $L(f)$, with each row (column) having only a single non-zero entry equal to one. In this basis, denoted as $\{|g\rangle\}_{g\in G}$, we have that $L(f)|g\rangle=|fg\rangle$. The representation $L(f)$ is called the \emph{left regular representation}. One can as well use the \emph{right regular representation} $R(f)$ with $R(f)|g\rangle=|gf^{-1}\rangle$, but without loss of generality in the rest of the paper we restrict only to $L(f)$, since for finite groups the right and left regular representations are equivalent \cite{Hazewinkel:Encyclopaedia}.

In our case the representation will generally not be unitary, so when $|G|=D$ we will have that
\begin{equation}
\label{eqn33}
T^{(m)}_{f}=SL(f)S^{-1},
\end{equation}
for some invertible matrix $S$. 

In the remainder of the paper we restrict consideration to $|G|=D$ (or, equivalently, to $n_t=1$), and note that all results obtained in the remainder of the paper are valid (with small modifications) also when $|G|<D$. However, the notation becomes a bit cumbersome, so we defer detailed discussion about the $|G|<D$ case to Appendix~\ref{apdxE}.

\subsection{Form of the clonable states when all are maximally entangled}
It was shown in \cite{NewJPhys.6.164} that when at least one of the states in $\SC$ is maximally entangled, then all states in $\SC$ must also be maximally entangled. In this section, we consider such sets, in which case the $T^{(m)}_f$ must all be unitary. This follows directly from the fact that when $\psi_e$ is proportional to the identity then $\psi_f$ is proportional to $T^{(m)}_f$, and also that $\ket{\psi_f}$ is maximally entangled if and only if $\psi_f$ is proportional to a unitary.

We have seen that when $N=D$, then $T^{(m)}_f=SL(f)S^{-1}$ for some invertible $S$, and $L(f)$ is the permutation form of the regular representation of group $G$. However, we have
\begin{lemma}[Unitary equivalence]\label{thm5}
For any two unitary representations $T_f$ and $L(f)$ of a finite group $G$, which are equivalent in the sense that $T_f=SL(f)S^{-1}$ for some invertible matrix $S$, then these two representations are also equivalent by a unitary similarity transformation, $T_f=WL(f)W^\dagger$, with $W$ unitary.
\end{lemma}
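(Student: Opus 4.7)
The plan is to use the standard unitarization trick based on a positive square root. Starting from $T_f = S L(f) S^{-1}$, I would exploit the fact that both representations are unitary to force a commutation relation: since $T_f^{-1} = T_f^\dagger$ and $L(f)^{-1} = L(f)^\dagger$, writing $T_f^\dagger = S L(f)^\dagger S^{-1}$ from the unitarity of $T_f$ and comparing with the direct transpose-conjugate $T_f^\dagger = (S^{-1})^\dagger L(f)^\dagger S^\dagger$ gives, after rearranging, that the positive definite matrix $P := S^\dagger S$ satisfies $P\, L(f) = L(f)\, P$ for every $f \in G$.

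Next I would take the unique positive definite Hermitian square root $Q := P^{1/2}$. Since $Q$ is a polynomial (or, more carefully, a continuous-functional-calculus expression) in $P$, it inherits the commutation property: $Q\, L(f) = L(f)\, Q$ for every $f \in G$, and the same holds for $Q^{-1}$.

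Then I would define
\begin{equation}
W := S Q^{-1} = S P^{-1/2},
\end{equation}
and verify the two claims. First, $W^\dagger W = Q^{-1} S^\dagger S Q^{-1} = Q^{-1} P Q^{-1} = I$, so $W$ is unitary. Second, using that $Q^{-1}$ commutes with $L(f)$,
\begin{equation}
W L(f) W^\dagger = S Q^{-1} L(f) Q^{-1} S^\dagger = S L(f) Q^{-2} S^\dagger = S L(f) S^{-1} (S^\dagger)^{-1} S^\dagger = T_f,
\end{equation}
which is exactly the desired unitary equivalence.

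There is really no serious obstacle here — the only subtle point is establishing the commutation $[P, L(f)] = 0$, which is the step where the unitarity of both $T_f$ and $L(f)$ is essential; without that hypothesis the trick would fail. An alternative route, which I would mention as a sanity check, is the group-averaging construction $P' := \frac{1}{|G|}\sum_{f\in G} L(f)^\dagger S^\dagger S L(f)$, which manifestly commutes with every $L(f)$ and is positive definite, leading to the same conclusion; the two approaches coincide up to a positive scalar precisely because $S^\dagger S$ already commutes with the $L(f)$.
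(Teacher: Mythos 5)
Your proof is correct. It takes a mildly different route from the paper's: you use the polar decomposition of $S$ — establishing that $P=S^\dagger S$ commutes with every $L(f)$ (which is exactly where the unitarity of both representations enters), passing to the positive square root $Q=P^{1/2}$ by functional calculus, and setting $W=SQ^{-1}$ — whereas the paper instead writes the singular value decomposition $S=V\DC U$, shows that the conjugated representations $\tilde T_f$ and $\tilde L(f)$ each commute with $\DC^2$, and then argues entrywise that $[\tilde T_f]_{ij}=[\tilde L(f)]_{ij}$ whenever $\DC_i=\DC_j$ (the off-block entries vanishing), concluding $W=VU$. The two arguments rest on the same core fact (the positive part of $S$ commutes with the representation) and in fact produce the same unitary, since $S(S^\dagger S)^{-1/2}=V\DC U\,(U^\dagger\DC^{-1}U)=VU$; your version is the more standard and basis-free packaging, avoiding the matrix-element bookkeeping, while the paper's SVD version makes the block structure of the commutant explicit. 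One small correction to your closing aside: if $S^\dagger S$ already commutes with all $L(f)$, the group average $P'=\tfrac{1}{|G|}\sum_f L(f)^\dagger S^\dagger S\,L(f)$ equals $S^\dagger S$ exactly, not merely up to a positive scalar.
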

A proof of this lemma is given in Chap. 3.3 of \cite{Ma:GroupThPhys}, and we provide an alternative proof in Appendix~\ref{apdxA}.

What this lemma tells us is that $\psi_f$ is proportional to $WL(f)\psi_eW^\dagger$ (since by local unitaries, $\psi_e$ can be made proportional to the identity, we will assume here that this is the case, and then $\psi_e$ commutes with $W^\dagger$), or 
\begin{align}\label{eqn34}
\ket{\psi_f}&=c_f(W L(f)\otimes W^\ast)\sum_{g\in G}\ket{g}^A\ket{g}^B\notag\\
&=\frac{1}{\sqrt{D}}(W\otimes W^\ast)\sum_{g\in G}\ket{fg}^A\ket{g}^B,
\end{align}
where $W^\ast$ is the complex conjugate of $W$, the states $\{\ket{g}\}_{g\in G}$ are some orthonormal basis, $\ip{g}{h}=\delta_{g,h}$, and we have omitted an unimportant overall phase (from $c_f$, of magnitude $D^{-1/2}$) in the last line. 
Note that up to unimportant local unitaries and relabeling of group elements, the set of states \eqref{eqn34} can be written either as
\begin{equation}\label{eqn35}
\ket{\psi_f}=\frac{1}{\sqrt{D}}\sum_{g\in G}\ket{fg}^A\ket{g}^B
\end{equation}
or
\begin{equation}\label{eqn36}
\ket{\psi_f}=\frac{1}{\sqrt{D}}\sum_{g\in G}\ket{g}^A\ket{fg}^B.
\end{equation}
The states above are of a form that we will refer to as ``group-shifted".

In Section~\ref{sct4}, we provide an explicit LOCC protocol that accomplishes cloning of such shifted sets of states. Thus, we have
\begin{theorem}[Maximally entangled states]\label{thm6}
A set of maximally entangled states on a $D\times D$ system can be cloned by LOCC if and only if there exists a choice of Schmidt bases shared by those states such that they have a group-shifted form, as in \eqref{eqn35} or \eqref{eqn36}.
\end{theorem}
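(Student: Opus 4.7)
The proof has two directions, and most of the machinery for the harder (necessity) direction has already been assembled in the preceding sections. My plan is as follows.

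For the necessity direction, suppose $\SC$ is a set of maximally entangled states on a $D\times D$ system that can be locally cloned by a separable operation. By Theorem~\ref{thm3}, after possibly enlarging $\SC$ to a maximal clonable extension, the operators $\{T^{(m)}_f\}_{f\in G}$ form an ordinary representation of a finite group $G$ whose order equals the new cardinality of $\SC$. Restricting to the case $|G|=D$ (with $|G|<D$ deferred to Appendix~\ref{apdxE}), the discussion following Theorem~\ref{thm4} shows that this representation is similar to the regular representation, so $T^{(m)}_f = S\,L(f)\,S^{-1}$ for some invertible $S$.

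Next I would exploit the maximal-entanglement hypothesis. Invoking the local-unitary freedom in the cloning problem noted after Eq.~\eqref{eqn6}, I can assume without loss of generality that $\psi_e\propto I$. Then $\psi_f\propto T^{(m)}_f$, and maximal entanglement of each $\ket{\psi_f}$ forces $T^{(m)}_f$ to be proportional to a unitary for every $f\in G$. Lemma~\ref{thm5} now upgrades the similarity to a unitary one, giving $T^{(m)}_f = W\,L(f)\,W^\dagger$ with $W$ unitary. Plugging back into $\psi_f\propto T^{(m)}_f$, reading off the corresponding bipartite vector via map-state duality, and using the standard identity $(X\otimes I)\sum_g\ket{g}\ket{g}=(I\otimes X^T)\sum_g\ket{g}\ket{g}$ to move $W^\dagger$ through the maximally entangled vector, recovers Eq.~\eqref{eqn34}. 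Absorbing the remaining factor $W\otimes W^\ast$ into a fresh choice of Schmidt bases on Alice's and Bob's sides then yields the group-shifted form \eqref{eqn35}; the mirrored argument (or using the equivalence of left and right regular representations) produces \eqref{eqn36}.

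For sufficiency, I would invoke the explicit LOCC cloning protocol for group-shifted maximally entangled sets that is constructed in Section~\ref{sct4}. Since LOCC is contained in the class of separable operations, this closes the equivalence.

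The principal obstacle I anticipate lies in the careful bookkeeping when passing from Eq.~\eqref{eqn34} to \eqref{eqn35}/\eqref{eqn36}: the absorbed unitaries must define a \emph{single} pair of Schmidt bases on $\HC_A$ and $\HC_B$ that realize the shifted form \emph{simultaneously} for every $f\in G$, which is exactly what the left-multiplication action $L(f)\ket{g}=\ket{fg}$ of the regular representation guarantees. A secondary subtlety is that Lemma~\ref{thm5} demands both $T^{(m)}_f$ and $L(f)$ to be unitary; this is not automatic from the group structure alone and is supplied here by the maximal-entanglement hypothesis, which is the reason the clean conclusion of Theorem~\ref{thm6} is confined to the maximally entangled case.
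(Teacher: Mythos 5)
Your proposal is correct and follows essentially the same route as the paper: the group structure of Theorem~\ref{thm3} and the regular-representation decomposition give $T^{(m)}_f=SL(f)S^{-1}$, maximal entanglement (with $\psi_e\propto I$ by local-unitary freedom) makes the $T^{(m)}_f$ unitary so that Lemma~\ref{thm5} upgrades $S$ to a unitary $W$, map-state duality then yields Eq.~\eqref{eqn34} and hence the shifted forms \eqref{eqn35}--\eqref{eqn36}, and sufficiency is supplied by the LOCC protocol of Sec.~\ref{sct4}. The subtleties you flag (simultaneity of the Schmidt bases across all $f$, and the need for unitarity before invoking Lemma~\ref{thm5}) are exactly the points the paper's argument addresses.
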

\noindent This extends the result of \cite{PhysRevA.74.032108}, which applied only for prime $D$.

Additionally, we remark that in our protocol presented in Sec.~\ref{sct4}, there is no need for classical communication (the measurement $M_r$ and the additional corrections $Q_r$ appearing in that protocol can be omitted when the states to be cloned are maximally entangled). This result was first proven in \cite{NewJPhys.6.164}, where it was shown that the Kraus operators implementing the cloning of maximally entangled states have to be proportional to unitary operators. A completely different proof of this fact was later provided in \cite{PhysRevA.76.032310}, in which it was shown that a separable operation that maps a pure state to another pure state, both sharing the same set of Schmidt coefficients, must have its Kraus operators proportional to unitaries; in our case $\ket{\psi_f}\otimes\ket{\phi}$ and $\ket{\psi_f}\otimes\ket{\psi_f}$ do share the same set of Schmidt coefficients, since they are maximally entangled. We here have another simple proof of this result, since we have proved in Theorem \ref{thm6} that a set of maximally entangled states must be group-shifted in order that they can be cloned, and since our protocol in Sec.~\ref{sct4} clones any set that is group-shifted without using communication.

\subsection{Form of the clonable states when $D=2$ (qubits)}
Here, we restrict our attention to local cloning of qubit entangled states, $D=2$. As $D$ is prime, we know from Theorem~\ref{thm3} that exactly two states can be cloned, $\SC=\{ \ket{\psi_e}^{AB}, \ket{\psi_g}^{AB}\}$. Both are assumed to be entangled (non-product), but not maximally entangled.

Since there is only one independent Schmidt coefficient for a two-qubit state, any two such states are comparable under majorization, and then from part ii) of Theorem~\ref{thm2} it follows at once that these states have to share the same set of Schmidt coefficients. This is already a surprising result, implicitly assumed (but not proved) in recent work on local cloning of qubit states \cite{PhysRevA.76.052305}. We can actually prove a stronger condition: not only do the states have to share the same set of Schmidt coefficients, but  they must also share the same Schmidt basis and be of a shifted form, as summarized by the following theorem.

\begin{theorem}[Entangled qubits]\label{thm7}
Let $\SC=\{\ket{\psi_e}^{AB}, \ket{\psi_g}^{AB}\}$ be a set of 2 orthogonal two-qubit entangled states and let $\lambda$ be the largest Schmidt coefficient of $\ket{\psi_e}^{AB}$, assumed to satisfy $1/2<\lambda<1$. If the local cloning of $\SC$ using a two-qubit entangled blank state $\ket{\phi}^{ab}$ is possible by a separable operation, then, up to local unitaries (that is, the same local unitaries acting on both states), the states must either  be of the form
\begin{align}\label{eqn37}
\ket{\psi_e}^{AB}&=\sqrt{\lambda}\ket{0}^A\ket{0}^{B}+\sqrt{1-\lambda}\ket{1}^{A}\ket{1}^{B}\notag\\
\ket{\psi_g}^{AB}&=\sqrt{\lambda}\ket{0}^{A}\ket{1}^{B}+\sqrt{1-\lambda}\ket{1}^{A}\ket{0}^{B}
\end{align}
or
\begin{align}\label{eqn38}
\ket{\psi_e}^{AB}&=\sqrt{\lambda}\ket{0}^{A}\ket{0}^{B}+\sqrt{1-\lambda}\ket{1}^{A}\ket{1}^{B}\notag\\
\ket{\psi_g}^{AB}&=\sqrt{\lambda}\ket{1}^{A}\ket{0}^{B}+\sqrt{1-\lambda}\ket{0}^{A}\ket{1}^{B}.
\end{align}
\end{theorem}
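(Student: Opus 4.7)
The plan is to chain Theorems~\ref{thm2}, \ref{thm3}, and~\ref{thm4} with the uniqueness of the order-two group, then solve a small matrix equation to pin down $\psi_g$. First, since $D=2$ is prime and the apparatus clones two states, Theorem~\ref{thm4} forces $|G|=2$, so $G=\{e,g\}\simeq\mathbb{Z}_2$ with $g^2=e$. Because any two two-qubit states are automatically comparable under majorization (only a single independent Schmidt coefficient), Theorem~\ref{thm2}(ii) forces the two states in $\SC$ to share the Schmidt spectrum $(\lambda,1-\lambda)$. Since local cloning is invariant under a common local unitary, I would then put $\ket{\psi_e}^{AB}$ into Schmidt form so that in the map-state dual picture $\psi_e=\mathrm{diag}(\sqrt{\lambda},\sqrt{1-\lambda})$, and the only thing left to determine is the matrix form of $\psi_g$ (modulo the residual diagonal local unitaries that fix $\ket{\psi_e}$).

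Next I would exploit that $\{T^{(m)}_f\}$ is a representation of $G$ (Theorem~\ref{thm3}): from $g^2=e$ we get $(T^{(m)}_g)^2=I$, which via \eqref{eqn22} becomes the matrix equation $\psi_g\psi_e^{-1}\psi_g=\tau\,\psi_e$ for some phase $\tau$. Writing $\psi_g=\bigl(\begin{smallmatrix}\alpha&\beta\\ \gamma&\delta\end{smallmatrix}\bigr)$ and requiring both off-diagonal entries of $\psi_g\psi_e^{-1}\psi_g$ to vanish produces the factored conditions $\beta\bigl(\alpha\sqrt{1-\lambda}+\delta\sqrt{\lambda}\bigr)=0$ and $\gamma\bigl(\alpha\sqrt{1-\lambda}+\delta\sqrt{\lambda}\bigr)=0$. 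Combined with the orthogonality $\langle\psi_e|\psi_g\rangle=\sqrt{\lambda}\,\alpha+\sqrt{1-\lambda}\,\delta=0$, the resulting linear system in $(\alpha,\delta)$ has determinant $2\lambda-1\neq 0$ and forces $\alpha=\delta=0$. The alternative branch $\beta=\gamma=0$ would leave $\psi_g$ diagonal; equal singular values together with orthogonality then require $\lambda=1/2$, contradicting the hypothesis. The remaining ``mixed'' subcases with exactly one of $\beta,\gamma$ zero are excluded by the full-rank requirement of Theorem~\ref{thm1}.

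Hence $\psi_g=\bigl(\begin{smallmatrix}0&\beta\\ \gamma&0\end{smallmatrix}\bigr)$, and its singular values $\{|\beta|,|\gamma|\}$ must coincide with $\{\sqrt{\lambda},\sqrt{1-\lambda}\}$, giving exactly the two cases of the theorem. The residual phases of $\beta,\gamma$ are then removed by a diagonal local unitary $U^A\otimes V^B$ with $U=\mathrm{diag}(u_0,u_1)$, $V=\mathrm{diag}(v_0,v_1)$, subject to $u_0v_0=u_1v_1=1$ so that $\ket{\psi_e}$ is preserved, together with the overall phase freedom in the definition of $\ket{\psi_g}$; these two real parameters suffice to rotate the $\ket{01}$ and $\ket{10}$ coefficients of $\ket{\psi_g}$ independently to real positive values, yielding \eqref{eqn37} when $(|\beta|,|\gamma|)=(\sqrt{\lambda},\sqrt{1-\lambda})$ and \eqref{eqn38} in the opposite assignment.

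The step I expect to require the most care is the elimination of the diagonal branch $\beta=\gamma=0$: it is algebraically short but conceptually the crucial one, since this is exactly where the hypothesis $\lambda\neq 1/2$ enters and distinguishes this partially entangled theorem from the maximally entangled case already covered by Theorem~\ref{thm6}. The bookkeeping of the residual local-unitary plus global-phase freedom in the final step is routine, but still needs to be handled honestly to confirm that the two singular-value assignments genuinely give two distinct canonical forms, and not two forms related by a further common local unitary fixing $\ket{\psi_e}$.
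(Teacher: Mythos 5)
Your overall route is the same as the paper's: put $\psi_e$ in Schmidt form, impose orthogonality, and use $(T^{(m)}_g)^2=I$ from the $\mathbb{Z}_2$ group structure in the form of the matrix equation $\psi_g\psi_e^{-1}\psi_g=\tau\,\psi_e$ with $|\tau|=1$. The anti-diagonal branch and the residual phase bookkeeping are handled correctly. However, your elimination of the diagonal branch $\beta=\gamma=0$ --- the step you yourself flagged as the crucial one --- has a genuine gap. You claim that ``equal singular values together with orthogonality'' force $\lambda=1/2$, but this is false: the state $\ket{\psi_g}=\sqrt{1-\lambda}\,\ket{0}^A\ket{0}^B-\sqrt{\lambda}\,\ket{1}^A\ket{1}^B$, i.e.\ $\psi_g=\mathrm{diag}(\sqrt{1-\lambda},-\sqrt{\lambda})$, is diagonal, orthogonal to $\psi_e$, and has exactly the singular values $\{\sqrt{\lambda},\sqrt{1-\lambda}\}$ for every $\lambda$. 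This is precisely the family, noted in the paper immediately after the theorem, that the theorem must rule out. Your argument only checks that the \emph{set} of singular values matches, and the assignment $|\alpha|=\sqrt{1-\lambda}$, $|\delta|=\sqrt{\lambda}$ is compatible with orthogonality for all $\lambda$.

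The fix is contained in the very equation you wrote down: you used only the off-diagonal entries of $\psi_g\psi_e^{-1}\psi_g=\tau\,\psi_e$, but in the branch $\beta=\gamma=0$ the diagonal entries read $\alpha^2/\sqrt{\lambda}=\tau\sqrt{\lambda}$ and $\delta^2/\sqrt{1-\lambda}=\tau\sqrt{1-\lambda}$, which with $|\tau|=1$ force the \emph{ordered} assignment $|\alpha|^2=\lambda$, $|\delta|^2=1-\lambda$; only then does orthogonality $\sqrt{\lambda}\,\alpha+\sqrt{1-\lambda}\,\delta=0$ yield $\lambda=1-\lambda$, the desired contradiction. This is exactly the role played by Eq.~\eqref{eqn44} in the paper's proof. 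With that repair your argument coincides with the paper's. Two minor further points: the statement that exactly two states can be cloned when $D=2$ is the content of Theorem~\ref{thm4}; and there are no ``mixed'' subcases with exactly one of $\beta,\gamma$ zero to exclude, since whenever $\alpha\sqrt{1-\lambda}+\delta\sqrt{\lambda}\neq0$ both $\beta$ and $\gamma$ vanish simultaneously, while in the other branch $\alpha=\delta=0$ and full rank forces both $\beta$ and $\gamma$ to be nonzero.
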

\noindent Note that a relative phase $\mathrm{e}^{\mathrm{i}\vartheta}$ may be introduced into $\ket{\psi_g}$, without altering $\ket{\psi_e}$, by Alice and Bob doing local unitaries on systems $A$ and $B$, $U^{\!A,B}=\dya{0}+\mathrm{e}^{\pm\mathrm{i}\vartheta/2}\dya{1}$ (one of them chooses the upper sign, the other does the lower, which accomplishes the task up to an unimportant overall phase). Therefore, the theorem allows cloning of states with these phases.

\begin{proof}
First note that without loss of generality one can always assume that the first state $\ket{\psi_e}^{AB}$ is already in Schmidt form, 
\begin{equation}\label{eqn39}
\ket{\psi_e}^{AB}=\sqrt{\lambda}\ket{0}^{A}\ket{0}^{B}+\sqrt{1-\lambda}\ket{1}^{A}\ket{1}^{B},
\end{equation}
since this can be done by a local unitary map $U^{A}\otimes V^{B}$. Therefore, the operators $\psi_e$ and $\psi_g$ obtained by map-state duality can be assumed to have the form
\begin{align}
\label{eqn40}
\psi_e&=\left(
\begin{array}{cc}
\sqrt{\lambda} & 0\\
0 & \sqrt{1-\lambda}
\end{array}
\right)
,\\
\label{eqn41}
\psi_g&=\left(
\begin{array}{cc}
a_{00} & a_{01}\\
a_{10} & a_{11}
\end{array}
\right),
\end{align}
where $\lambda$ is the largest Schmidt coefficient of $\ket{\psi_e}^{AB}$ and $a_{ij}$ are complex numbers with $\sum |a_{ij}|^2=1$, which is equivalent to the requirement that $\ket{\psi_g}$ be normalized. 

Orthogonality between these two states implies that
\begin{equation}\label{eqn42}
0=\sqrt{\lambda}a_{00}+\sqrt{1-\lambda}a_{11}.
\end{equation}
Since the only group of order $2$ is cyclic with elements $e,g$ and $g^2=e$, we have from Theorem~\ref{thm3} that $(T^{(m)}_{g})^2=SL(g)^2S^{-1}=I$. Thus, we require
\begin{align}
\label{eqn43}
(\psi_g\psi_e^{-1})^2&=
\left(
\begin{array}{cc}
\mathrm{e}^{\mathrm{i}\vartheta} & 0\\
0 & \mathrm{e}^{\mathrm{i}\vartheta}
\end{array}
\right),
\end{align}
where the factor of $\mathrm{e}^{\mathrm{i}\vartheta}$ arises from the phases that appear in the definition of $T^{(m)}_g$, see \eqref{eqn22}. Thus, \eqref{eqn43} implies
\begin{equation}\label{eqn44}
\frac{a_{00}^2}{\lambda}=\frac{a_{11}^2}{1-\lambda}=\mathrm{e}^{\mathrm{i}\vartheta}-\frac{a_{01}a_{10}}{\sqrt{\lambda(1-\lambda)}},
\end{equation}
and either (i) $a_{00}\sqrt{1-\lambda}=-a_{11}\sqrt{\lambda}$; or (ii) $a_{01}=0=a_{10}$. The condition that $\psi_g$ be normalized in the latter case (ii), along with \eqref{eqn42} and \eqref{eqn44}, can only be satisfied if $\lambda=1/2$, a case we are not considering here. The former case (i) along with \eqref{eqn42} implies that $a_{00}=0=a_{11}$ (again, assuming $\lambda\ne 1/2$). This concludes the proof, since it implies that $\ket{\psi_g}^{AB}$ has to have either the form \eqref{eqn37} or the form \eqref{eqn38}, up to an unimportant global phase.
\end{proof}

Now one can immediately see that one of the families of states considered in \cite{PhysRevA.76.052305}, of the form 
$\ket{\psi_e}=\sqrt{\lambda}\ket{0}^A\ket{0}^B+\sqrt{1-\lambda}\ket{1}^A\ket{1}^B$ and $\ket{\psi_g}=\sqrt{1-\lambda}\ket{0}^A\ket{0}^B-\sqrt{\lambda}\ket{1}^A\ket{1}^B$ cannot be locally cloned with a blank state of Schmidt rank 2, unless they are maximally entangled, case already studied in \cite{NewJPhys.6.164}.

\section{Local cloning of group-shifted states: explicit protocol using a maximally entangled blank state}\label{sct4}

Consider now a set of group-shifted partially entangled states $\SC=\{\ket{\psi_f}^{AB}\}_{f\in G}$ on $\HC_A\otimes\HC_B$, where the dimension of both Hilbert spaces $\HC_A$ and $\HC_B$ is equal to $D$,
\begin{equation}
\label{eqn45}
\ket{\psi_f}^{AB}=\sum_{g\in G}\sqrt{\lambda_g}\ket{g}^{A}\ket{fg}^{B},
\end{equation}
and we remind the reader that throughout this section we restrict to the $|G|=D$ case (see Appendix~\ref{apdxE} for the $|G|<D$ case). The reader should also note that we are here using the form \eqref{eqn36}, where the shift is on the $B$ side, rather than the form \eqref{eqn35}, which was used throughout Section III with the shift on the $A$ side.

In the following we present a protocol that locally clones $\SC$ using a maximally entangled blank state of Schmidt rank $D$. Our protocol, which works for any group $G$, is a direct generalization of the one presented for the special case of a cyclic group in \cite{PhysRevA.73.012343}.
\begin{theorem}[Group shifted states]\label{thm8}
Let $\SC=\{\ket{\psi_f}^{AB}\}_{f\in G}$ be a set of group-shifted full Schmidt rank bipartite orthogonal entangled states on $\HC_A\otimes\HC_B$ as defined by \eqref{eqn45}. The local cloning of $\SC$ is always possible using a maximally entangled blank state $\ket{\phi}^{ab}$ of Schmidt rank $D$.
\end{theorem}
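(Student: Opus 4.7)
The plan is to present an explicit three-step LOCC protocol and check that it clones every $\ket{\psi_f}^{AB}$ deterministically. Without loss of generality (up to local unitaries on $a$ and $b$) I take the blank to be $\ket{\phi}^{ab}=\frac{1}{\sqrt{D}}\sum_{h\in G}\ket{h}^a\ket{h}^b$ in a computational basis labeled by group elements. The three steps are: (i) local controlled shifts built from the left regular representation $L$, (ii) a POVM by Alice on her ancilla $a$, and (iii) a unitary correction by Bob on $b$, classically conditioned on Alice's outcome.

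For step (i), Alice applies $U^{Aa}=\sum_{g\in G}\dya{g}^A\otimes L(g)^a$ and Bob applies $V^{Bb}=\sum_{g\in G}\dya{g}^B\otimes L(g)^b$; both are unitaries because each $L(g)$ is a permutation of the basis $\{\ket{h}\}_{h\in G}$. Starting from $\ket{\psi_f}^{AB}\otimes\ket{\phi}^{ab}$ and reindexing $k=gh$ in the double sum, a direct computation yields
\begin{equation*}
(U^{Aa}\otimes V^{Bb})\bigl(\ket{\psi_f}^{AB}\otimes\ket{\phi}^{ab}\bigr)=\ket{\psi_f}^{AB}\otimes\frac{1}{\sqrt{D}}\sum_{k\in G}\ket{k}^a\ket{fk}^b.
\end{equation*}
The state on $AB$ is left intact, and the blank has become the group-shifted maximally entangled state carrying the same shift $f$ as the input.

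For step (ii), Alice performs on $a$ the POVM with Kraus operators $M_r=\sum_{k\in G}\sqrt{\lambda_{kr}}\,\dyad{kr}{k}^a$, indexed by $r\in G$. The closure condition $\sum_r M_r^\dagger M_r=I^a$ follows from $\sum_r\lambda_{kr}=\sum_{g\in G}\lambda_g=1$, since right multiplication by $r$ is a bijection of $G$. Substituting $k'=kr$ after applying $M_r$ leaves the global state proportional to
\begin{equation*}
\ket{\psi_f}^{AB}\otimes\frac{1}{\sqrt{D}}\sum_{k'\in G}\sqrt{\lambda_{k'}}\,\ket{k'}^a\ket{fk'r^{-1}}^b,
\end{equation*}
with outcome probability $1/D$ for every $r$. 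In step (iii), Alice sends $r$ to Bob, who applies $Q_r=R(r^{-1})^b$; since $R(r^{-1})\ket{x}=\ket{xr}$, this sends $\ket{fk'r^{-1}}^b\mapsto\ket{fk'}^b$ and produces the clone $\ket{\psi_f}^{ab}$ alongside the untouched original $\ket{\psi_f}^{AB}$, for every $f$ and every outcome $r$.

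The step I expect to require the most care is arranging for Bob's correction to be independent of $f$, which he has no way of learning without destroying $\ket{\psi_f}^{AB}$. The key observation is to use \emph{right} multiplication in the basis index of Alice's Kraus operators, so that the residual discrepancy between the post-measurement state and the target is a pure right regular action $R(r^{-1})$ on $b$. Because right and left regular actions on $G$ commute, the left shift $L(f)$ carrying the $f$-dependence slips through this correction unchanged. A naive design using left multiplication in the POVM would instead require a conjugation $L(fr^{-1}f^{-1})$ on $b$, which in the non-abelian case Bob can implement only if he already knows $f$.
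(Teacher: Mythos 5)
Your protocol is correct and is essentially the paper's own protocol: the same controlled group shifts, the same right-multiplication measurement (your $M_r$ is just the paper's diagonal measurement operator with Alice's unitary correction $Q_r$ pre-composed into it), and the same correction on Bob's side. Your closing remark on why right-regular corrections commute past the left-regular $f$-dependence is a nice explicit justification of a point the paper leaves implicit in the computation, but the route is the same.
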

\begin{proof}
Without loss of generality the maximally entangled blank state can be written as
\begin{equation}\label{eqn46}
\ket{\phi}^{ab}=\frac{1}{\sqrt{D}}\sum_{h\in G}\ket{h}^a\ket{h}^b.
\end{equation}
The local cloning protocol is summarized below and the quantum circuit is displayed in Fig.~\ref{fgr1}. 
\begin{figure}
\includegraphics[scale=0.95]{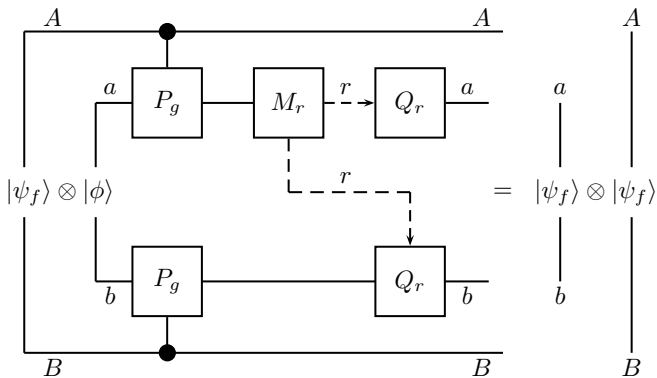}
\caption{Circuit diagram for the local cloning of group-shifted states with a maximally entangled blank state. There is no need to perform the measurement $M_r$ and the corrections $Q_r$ whenever the states to be cloned are maximally entangled.}
\label{fgr1}
\end{figure}
\begin{enumerate}
	\item Starting with $\ket{\psi_f}^{AB}\otimes\ket{\phi}^{ab}$, both Alice and Bob apply the ``controlled-group" unitary 
	\begin{equation}\label{eqn47}
		\sum_{g\in G} \dyad{g}{g} \otimes P_g,\quad\text{with } P_g=\sum_{h\in G}\dyad{gh}{h},
	\end{equation}
	where the permutation $P_g$ acts on system $a$ ($b$) and is controlled by system $A$ ($B$), to obtain
	\begin{align}\label{eqn48}
		&\sum_{g\in G}\sqrt{\lambda_g}\ket{g}^A\ket{fg}^B
		\frac{1}{\sqrt{D}}\sum_{h\in G}\ket{gh}^a\ket{fgh}^b\notag\\
		&=\sum_{g\in G}\sqrt{\lambda_g}\ket{g}^A\ket{fg}^B
		\frac{1}{\sqrt{D}}\sum_{h\in G}\ket{h}^a\ket{fh}^b.
	\end{align}
	\item Next Alice performs a generalized measurement on system $a$ with Kraus operators
	\begin{equation}\label{eqn49}
			M_{r}=\sum_{h\in G}\sqrt{\lambda_{hr}}\dyad{h}{h},\quad \sum_{r\in G}{M_{r}}^{\dagger}M_{r}=I,
	\end{equation}
	and communicates the result $r$ to Bob. Conditioned on the result $r$, the output state is
	\begin{equation}\label{eqn50}
			\sum_{g\in G}\sqrt{\lambda_g}\ket{g}^A\ket{fg}^B
		\sum_{h\in G}\sqrt{\lambda_{hr}}\ket{h}^a\ket{fh}^b.
	\end{equation}
	\item Both Alice and Bob apply the unitary correction 
	\begin{equation}\label{eqn51}
		Q_r=\sum_{h\in G}\dyad{hr}{h}
	\end{equation}
	on systems $a$ and $b$, respectively,
	to obtain
	\begin{align}\label{eqn52}
		&\sum_{g\in G}\sqrt{\lambda_g}\ket{g}^A\ket{fg}^B
		\sum_{h\in G}\sqrt{\lambda_{hr}}\ket{hr}^a\ket{fhr}^b\notag\\
		&=\sum_{g\in G}\sqrt{\lambda_g}\ket{g}^A\ket{fg}^B
		\sum_{h\in G}\sqrt{\lambda_{h}}\ket{h}^a\ket{fh}^b\notag\\
		&=\ket{\psi_f}^{AB}\otimes\ket{\psi_f}^{ab},
	\end{align}
	which is the desired output.
\end{enumerate}
\end{proof}

Note that from symmetry considerations states of the form $\sum_{g\in G}\sqrt{\lambda_g}\ket{fg}^{A}\ket{g}^{B}$ (with the term $fg$ appearing now on Alice's side instead of Bob's side) can also be locally-cloned, by interchanging the roles of Alice and Bob in the protocol, e.g. performing the measurement $M_r$ on system $b$  instead of $a$, then sending the result back to $a$. Therefore in the following, when discussing group-shifted states, we will restrict to the states of the form \eqref{eqn45}.

\section{Local cloning of group-shifted states: minimum entanglement of the blank\label{sct5}}
Here again, we restrict for simplicity to the $|G|=D$ case, and discuss the extension of the results for $|G|<D$ in Appendix~\ref{apdxE}.
\subsection{Necessary conditions for arbitrary $D$}
We now turn our attention to the task of characterizing the blank state, which essentially amounts to determining the amount of entanglement it must have in order for the local cloning to be possible. We first give a very general lower bound as,
\begin{theorem}[Minimum entanglement of the blank]\label{thm9}
Let $\SC=\{\ket{\psi_f}^{AB}\}_{f\in G}$ be a set of full Schmidt rank bipartite orthogonal entangled states on $\HC_A\otimes\HC_B$. If the local cloning of $\SC$ using a blank state $\ket{\phi}^{ab}\in\HC_a\otimes\HC_b$ is possible by a separable operation, then it must be that
\begin{equation}\label{eqn53}
Ent(\ket{\phi}^{ab})\geqslant\max_{f\in G} Ent(\ket{\psi_f}^{AB}),
\end{equation}
where $Ent(\cdot)$ denotes any pure-state entanglement measure.
\end{theorem}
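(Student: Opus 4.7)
My plan is to exploit the fact that any pure-state entanglement measure is non-increasing under LOCC, and more generally under separable operations, across any bipartition, and to apply this to the grouped bipartition in which Alice holds both systems $A$ and $a$ while Bob holds $B$ and $b$. Across this $Aa|Bb$ cut, the cloning transformation is simply a deterministic pure-to-pure separable map, so I can use Nielsen's majorization criterion (together with its extension from LOCC to general separable operations) to constrain the Schmidt vectors of the input and the output.

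Concretely, the input $\ket{\psi_i}^{AB}\otimes\ket{\phi}^{ab}$ has Schmidt vector $\vec{\lambda}^i\otimes\vec{\gamma}$ across $Aa|Bb$ (the entrywise tensor product of the Schmidt vectors on each factor), while the output $\ket{\psi_i}^{AB}\otimes\ket{\psi_i}^{ab}$ has Schmidt vector $\vec{\lambda}^i\otimes\vec{\lambda}^i$. The Nielsen/separable majorization criterion then yields
\[
\vec{\lambda}^i\otimes\vec{\gamma}\prec\vec{\lambda}^i\otimes\vec{\lambda}^i.
\]
I next cancel the common tensor factor $\vec{\lambda}^i$ using the cancellation property of tensor-product majorization for probability vectors---namely, if $\vec{a}\otimes\vec{x}\prec\vec{a}\otimes\vec{y}$ with $\vec{a}$ a probability vector, then $\vec{x}\prec\vec{y}$---to obtain $\vec{\gamma}\prec\vec{\lambda}^i$. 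Since every pure-state entanglement measure is a Schur-concave function of the Schmidt coefficients (being a local-unitary invariant monotone under LOCC), this immediately gives $Ent(\ket{\phi}^{ab})\geq Ent(\ket{\psi_i}^{AB})$, and taking the maximum over $i\in G$ completes the argument.

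The main technical obstacle is the cancellation lemma for tensor-product majorization. The forward implication $\vec{x}\prec\vec{y}\Rightarrow\vec{a}\otimes\vec{x}\prec\vec{a}\otimes\vec{y}$ is standard, but the reverse requires a careful argument, which I would establish via the Hardy--Littlewood--P\'olya characterization of majorization in terms of convex test functions, applied with a judicious choice that extracts the $\vec{\lambda}^i$ factor. Alternatively, one can bypass the lemma for any \emph{additive} (or log-multiplicative) entanglement measure---such as the entropy of entanglement or the logarithm of the $G$-concurrence---since additivity lets one directly cancel $Ent(\ket{\psi_i}^{AB})$ from the inequality $Ent(\ket{\psi_i}^{AB}\otimes\ket{\phi}^{ab})\geq Ent(\ket{\psi_i}^{AB}\otimes\ket{\psi_i}^{ab})$ that follows from the monotonicity of $Ent$ across the $Aa|Bb$ cut. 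The majorization-based route is the stronger one, since it delivers the conclusion for \emph{every} pure-state entanglement measure simultaneously.
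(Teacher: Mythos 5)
Your top-level strategy coincides with the paper's: treat the cloning map as a deterministic pure-to-pure separable operation across the $Aa|Bb$ cut and invoke the non-increase of entanglement under separable operations (the result of \cite{PhysRevA.78.020304}), which at the majorization level gives $\vec{\lambda}^i\otimes\vec{\gamma}\prec\vec{\lambda}^i\otimes\vec{\lambda}^i$. The paper's own proof is exactly this observation stated in one line. The problem is the step you add to finish: the ``cancellation property of tensor-product majorization'' --- if $\vec{a}\otimes\vec{x}\prec\vec{a}\otimes\vec{y}$ for a probability vector $\vec{a}$ then $\vec{x}\prec\vec{y}$ --- is false. Its failure is precisely the phenomenon of entanglement catalysis (Jonathan and Plenio): taking $\vec{x}=(0.4,0.4,0.1,0.1)$, $\vec{y}=(0.5,0.25,0.25,0)$ and $\vec{a}=(0.6,0.4)$, one checks that $\vec{a}\otimes\vec{x}\prec\vec{a}\otimes\vec{y}$ while $\vec{x}\not\prec\vec{y}$ (the second partial sums are $0.8>0.75$). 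Hence no Hardy--Littlewood--P\'olya test-function argument can establish the lemma, and the ``careful argument'' you defer to does not exist. The special instance you actually need, $\vec{a}=\vec{y}=\vec{\lambda}^i$ (the target distribution catalyzing its own preparation), is not refuted by this example, but it is a nontrivial ``self-catalysis'' question for which you offer no proof; it cannot be waved through as standard.

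Your fallback via additivity is correct as far as it goes: monotonicity across $Aa|Bb$ gives $Ent(\ket{\psi_i}\otimes\ket{\phi})\geqslant Ent(\ket{\psi_i}\otimes\ket{\psi_i})$, and for additive measures (entropy of entanglement, logarithmic $G$-concurrence, R\'enyi entropies of the Schmidt vector) the common term cancels, yielding \eqref{eqn53}. This is, in substance, what the paper's terse proof delivers, so for that class of measures your argument is sound and essentially identical to the paper's. But the theorem as stated quantifies over \emph{every} pure-state entanglement measure, which (by the Schur-concavity characterization of such measures) is equivalent to the full majorization $\vec{\gamma}\prec\vec{\lambda}^i$; your route to that stronger statement rests on the false lemma and therefore has a genuine gap. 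If you want the full claim, you must either prove the $\vec{a}=\vec{y}$ cancellation directly or obtain $\vec{\gamma}\prec\vec{\lambda}^i$ by a different construction (compare the ancilla-superposition argument of Lemma~\ref{thm10}), rather than by dividing out a tensor factor from a majorization relation.
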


\begin{proof} We recently proved in \cite{PhysRevA.78.020304} that any pure state entanglement monotone is non-increasing on average under the general class of separable operations. The theorem follows directly, since otherwise the local cloning machine increases entanglement across the $Aa/Bb$ cut.
\end{proof}

Providing a more detailed lower bound appears to be difficult in general, but turns out to be possible in the special case of group-shifted states.

Consider again the set of $D$ group-shifted entangled states \eqref{eqn45}, and allow for arbitrary phases, $\vartheta_{\!f\!,g}$,
\begin{equation}
\label{eqn54}
\ket{\psi_f}^{AB}=\sum_{g\in G}\sqrt{\lambda_g}\mathrm{e}^{\mathrm{i}\vartheta_{\!f\!,g}}\ket{g}^{A}\ket{fg}^{B}.
\end{equation}
Without loss of generality, the blank state $\ket{\phi}^{ab}$ can be written as
\begin{equation}\label{eqn55}
\ket{\phi}^{ab}=\sum_{h\in G}\sqrt{\gamma_h}\ket{h}^a\ket{h}^b,
\end{equation}
where $\gamma_h$ are its Schmidt coefficients, $\sum_{h\in G}{\gamma_h}=1$.

All states in $\SC$ have the same Schmidt coefficients, and hence the same entanglement. As shown above, the local cloning of the above set of states is possible using a maximally entangled blank state when all phases $\mathrm{e}^{\mathrm{i}\vartheta_{\!f\!,g}}$ are chosen to be $1$, but it is not yet known if one can accomplish this task using less entanglement. One might hope that the local cloning of $\SC$ is possible using a blank state having the same entanglement as each of the states in $\SC$, which could be regarded as an ``optimal" local cloning. However we prove below that such an optimal local cloning is impossible with these states. Indeed we find a sizeable gap between the entanglement needed in the blank state and the entanglement of the states of $\SC$. For $D=2$ and $D=3$, we prove that a maximally entangled blank state is \emph{always} necessary. 

In the rest of this section we will use the \emph{rearrangement inequality} (see Chap. X of \cite{Hardy:Inequalities}), which states that 
\begin{equation}\label{eqn56}
x_ny_1+\cdots +x_1y_n\leqslant x_{\sigma(1)}y_1+\cdots+x_{\sigma(n)}y_n\leqslant x_1y_1+\cdots +x_ny_n
\end{equation}
for every choice of real numbers $x_1\leqslant\cdots\leqslant x_n$ and $y_1\leqslant\cdots\leqslant y_n$ and every permutation $x_{\sigma(1)},\ldots,x_{\sigma(n)}$ of $x_1,\ldots,x_n$.

The following Lemma is the most important technical result of this section (note that in the statement of this result, we will use $\overline g$ for inverses $g^{-1}$ of elements in the group $G$, which will make the notation somewhat more readable). 

\begin{lemma}[Majorization conditions]\label{thm10}
Let $\SC=\{\ket{\psi_f}^{AB}\}_{f\in G}$ be a set of $D$ group-shifted full Schmidt rank bipartite orthogonal entangled states on $\HC_A\otimes\HC_B$ as defined by \eqref{eqn54} and considered to be not maximally entangled. 
If the local cloning of $\SC$ using a blank state $\ket{\phi}^{ab}$ is possible by a separable operation, then
\begin{itemize}
\item[i)] The majorization condition,
\begin{equation}\label{eqn57}
\vec{\alpha}\prec\vec{\beta},
\end{equation}
must hold. Here, $\vec{\alpha}$ and $\vec{\beta}$ are  vectors with $D^2$ components indexed by elements $g,h\in G$,
\begin{align}\label{eqn58}
\alpha_{g,h}=\gamma_h\sum_{f\in G}\mu_{\overline f}\lambda_{f g},\quad
\beta_{g,h}=\sum_{f\in G}\mu_{\overline f}\lambda_{fg}\lambda_{fh},
\end{align}
and $\{\mu_f\}_{f\in G}$ is an arbitrary set of non-negative real coefficients that satisfy $\sum_{f}\mu_f=1$.

\item[ii)]
The smallest Schmidt coefficient $\gamma_{\min}$ of the blank state has to satisfy
\begin{equation}\label{eqn59}
\gamma_{\min}\geqslant
\max_{\{\mu_f\}}\frac{\min_{g,h\in G}\sum_{f\in G}\mu_{\overline f}\lambda_{fg}\lambda_{fh}}
{\min_{g\in G}\sum_{f\in G}\mu_{\overline f}\lambda_{fg}}.
\end{equation}

\item[iii)] In particular, a good choice of $\{\mu_f\}$  is given by
\begin{equation}\label{eqn60}
\mu_f=\frac{\eta}{\lambda_{\overline f}},\quad\text{with }\eta^{-1}=\sum_{g\in G}1/\lambda_g,
\end{equation}
for which \eqref{eqn59} becomes
\begin{equation}\label{eqn61}
\gamma_{\min}\geqslant\frac{1}{D}\min_{g,h\in G}\sum_{f\in G}\frac{1}{\lambda_{f}}\lambda_{fg}\lambda_{fh}.
\end{equation}
\end{itemize}
\end{lemma}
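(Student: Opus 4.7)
The plan is to reduce the mixed-state majorization of part (i) to a pure-state majorization by purifying the statistical mixture with an auxiliary system $C$ held on Alice's side. I introduce $C$ with orthonormal basis $\{\ket{f}^C\}_{f\in G}$ and define the purifications $\ket{\Phi_{\text{in}}}^{CAaBb}=\sum_{f\in G}\sqrt{\mu_f}\,\ket{f}^C\ket{\psi_f}^{AB}\ket{\phi}^{ab}$ and $\ket{\Phi_{\text{out}}}^{CAaBb}$ given by the same expression with $\ket{\phi}^{ab}$ replaced by $\ket{\psi_f}^{ab}$, so that tracing out $C$ recovers the input and output mixed states $\rho^{\text{total}}_{\text{in}}$ and $\rho^{\text{total}}_{\text{out}}$. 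A direct computation (relabeling $fg\to k$ inside the sums and changing variables $f\to\overline{f}$) shows that the reduced states on $Bb$ of both purifications are diagonal in the basis $\{\ket{g,h}^{Bb}\}_{g,h\in G}$ with eigenvalues exactly $\alpha_{g,h}$ and $\beta_{g,h}$ of \eqref{eqn58}; these vectors are therefore the Schmidt spectra of $\ket{\Phi_{\text{in}}}$ and $\ket{\Phi_{\text{out}}}$ across the $CAa|Bb$ cut.

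Next I show that the extended cloning map --- letting each Kraus element act as $(I^C\otimes A_m)\otimes B_m$, which remains a product across $CAa|Bb$ precisely because $C$ sits on Alice's side --- sends $\ket{\Phi_{\text{in}}}$ to pure-state branches that are all locally unitarily equivalent to $\ket{\Phi_{\text{out}}}$. Applying the $m$-th Kraus element yields $\sqrt{p_m}(V_m^C\otimes I)\ket{\Phi_{\text{out}}}$, where $V_m^C\ket{f}^C=\expoi{\varphi_{mf}}\ket{f}^C$ is a diagonal unitary on $C$. Because every branch shares the Schmidt spectrum of $\ket{\Phi_{\text{out}}}$, and because pure-state entanglement monotones are non-increasing on average under separable operations \cite{PhysRevA.78.020304}, every Schur-concave function $E$ of Schmidt coefficients satisfies $E(\vec\alpha)\geqslant E(\vec\beta)$. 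This is equivalent to the majorization $\vec\alpha\prec\vec\beta$ and proves part (i).

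Part (ii) then follows from the elementary consequence $\min_{g,h}\alpha_{g,h}\geqslant\min_{g,h}\beta_{g,h}$ of majorization, combined with the factorization $\alpha_{g,h}=\gamma_h\bigl(\sum_f\mu_{\overline{f}}\lambda_{fg}\bigr)$ which gives $\alpha_{\min}=\gamma_{\min}\min_g\sum_f\mu_{\overline{f}}\lambda_{fg}$; rearranging and optimizing over the choice of $\{\mu_f\}$ yields \eqref{eqn59}. For part (iii), I substitute $\mu_f=\eta/\lambda_{\overline{f}}$ and observe that $\prod_f\lambda_{fg}=\prod_f\lambda_f$ (since $f\mapsto fg$ is a bijection of $G$); combining this with the AM-GM inequality gives $\sum_f\lambda_{fg}/\lambda_f\geqslant D$ with equality at $g=e$, so the denominator of the bound from (ii) equals $\eta D$ and \eqref{eqn61} follows.

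The principal obstacle I anticipate is identifying the correct placement of the auxiliary system $C$. Placing $C$ on Bob's side would yield Schmidt spectra $\{\gamma_h\lambda_g\}$ and $\{\lambda_g\lambda_h\}$ that are independent of $\vec\mu$, reproducing only the single-state Nielsen condition that corresponds to $\mu_f=\delta_{f,e}$ in the lemma. Placing $C$ on Alice's side is essential both for preserving the product structure of the Kraus operators across the $CAa|Bb$ cut (so that the separable-monotone theorem of \cite{PhysRevA.78.020304} applies) and for producing Schmidt coefficients on $Bb$ that mix $\mu_{\overline{f}}$ with $\lambda_{fg}$ in precisely the combination appearing in $\vec\alpha$ and $\vec\beta$.
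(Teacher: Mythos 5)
Your proposal is correct and follows essentially the same route as the paper's Appendix~\ref{apdxB}: the same ancillary register $\{\ket{f}\}_{f\in G}$ attached on Alice's side of the cut, the same identification of $\vec{\alpha}$ and $\vec{\beta}$ as the Schmidt spectra of the input superposition and of the (phase-equivalent, hence spectrum-sharing) output branches across the $CAa/Bb$ cut, and the same appeal to the result of \cite{PhysRevA.78.020304} that the average output Schmidt vector of a separable operation majorizes the input one (which yields $\vec{\alpha}\prec\vec{\beta}$ directly, making your detour through Schur-concave functions unnecessary; note also that your branch formula $\sqrt{p_m}(V_m^C\otimes I)\ket{\Phi_{\mathrm{out}}}$ silently relies on the earlier fact $p_{mf}=p_m$). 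The only cosmetic difference is in part (iii), where you evaluate the denominator via AM--GM and the translation invariance of $\prod_{f}\lambda_{fg}$ while the paper uses the rearrangement inequality; both correctly give $\min_{g}\sum_{f}\lambda_{fg}/\lambda_{f}=D$.
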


The majorization relation \eqref{eqn57} restricts the possible allowed Schmidt coefficients for the blank state and can easily be checked numerically, but an analytic expression is difficult to find, since there is no simple way of ordering \eqref{eqn58}. That is why parts ii) and iii) of the Lemma have their importance, since they focus only on the smallest Schmidt coefficient of the blank state. In particular, the bound iii) is crucial in deriving the necessity of a maximally entangled blank state for the local cloning of qubit and group-shifted qutrit states.

The proof of the Lemma is rather technical and is presented in Appendix~\ref{apdxB}. However, the main idea of the proof consists of adding an ancillary system $\HC_E$ of dimension $D$ on Alice's side and then considering a superposition $\sum_{f\in G}\sqrt{\mu_f}
\ket{\psi_f}^{AB}\otimes\ket{\phi}^{ab}\otimes\ket{f}^E$, that will be mapped by the deterministic separable operation to an ensemble 
$\{p_m,\ket{\Psi_{m,\textrm{out}}}^{AaBbE}\}$, with $\ket{\Psi_{m,\textrm{out}}}^{AaBbE}=\sum_{f\in G}\mathrm{e}^{\mathrm{i}\varphi_{mf}}\sqrt{\mu_f}
\ket{\psi_f}^{AB}\otimes\ket{\psi_f}^{ab}\otimes\ket{f}^E,
$
and we have used the fact discovered above that $p_{mf}=p_m$, independent of $f$. 
The average Schmidt vector of the output ensemble over the $AaE/Bb$ cut has to majorize the input Schmidt vector, see \cite{PhysRevA.78.020304}, and this yields i). Parts ii) and iii) are direct implications of i).

\subsection{Qubits and Qutrits}

When $D=2$ or $D=3$, one can easily show that the minimum in \eqref{eqn61} is exactly one, and therefore
\begin{theorem}[Necessity of maximally entangled blank]\label{thm11}
The following must hold.

\begin{itemize}
\item[i)] A maximally entangled state of Schmidt rank 2 is the minimum required resource for the local cloning of 2 entangled qubit states.  
\item[ii)] A maximally entangled state of Schmidt rank 3 is the minimum required resource for the local cloning of 3 group-shifted entangled qutrit states.
\end{itemize}
\end{theorem}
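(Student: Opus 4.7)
The plan is to invoke Lemma~\ref{thm10}(iii), whose bound
\[
\gamma_{\min}\geqslant\frac{1}{D}\min_{g,h\in G}\sum_{f\in G}\frac{\lambda_{fg}\lambda_{fh}}{\lambda_f}
\]
reduces the theorem to showing that the minimum on the right equals $1$. Once that is established, $\gamma_{\min}\geqslant 1/D$ combined with $\sum_h\gamma_h=1$ over $D$ terms forces every $\gamma_h=1/D$, making $\ket{\phi}^{ab}$ maximally entangled. For part~(i), Theorem~\ref{thm7} tells us that any two clonable partially entangled qubits must be in the shifted form \eqref{eqn37} or \eqref{eqn38}, i.e.\ group-shifted under $\mathbb{Z}_2$, so Lemma~\ref{thm10} applies directly (the maximally entangled qubit case is separately covered by Theorem~\ref{thm9}).

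Set $M:=\min_{g,h}\sum_{f}\lambda_{fg}\lambda_{fh}/\lambda_f$. Choosing $g=h=e$ gives $\sum_f\lambda_f=1$, so $M\leqslant 1$. For the matching lower bound I would handle three cases. (a) If $g=e$ or $h=e$, the sum is a reindexing of $\sum_f\lambda_f$, hence equals $1$. (b) If $g=h\neq e$, Cauchy--Schwarz applied to $\sum_f\sqrt{\lambda_f}\cdot(\lambda_{fg}/\sqrt{\lambda_f})$ yields
\[
\sum_f\frac{\lambda_{fg}^2}{\lambda_f}\geqslant\Bigl(\sum_f\lambda_{fg}\Bigr)^{\!2}=1.
\]
For $D=2$ these two cases exhaust all pairs $(g,h)\in\mathbb{Z}_2\times\mathbb{Z}_2$, so $M=1$ and part~(i) follows.

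For $D=3$ with $G=\mathbb{Z}_3=\{e,a,a^2\}$, the only remaining ``cross'' cases are $(a,a^2)$ and $(a^2,a)$. Writing $x=\lambda_e$, $y=\lambda_a$, $z=\lambda_{a^2}$ (so $x+y+z=1$), the cross-case sum becomes $yz/x+zx/y+xy/z$; clearing denominators, the inequality $\geqslant 1$ reduces to $(xy)^2+(yz)^2+(zx)^2\geqslant xyz(x+y+z)$. Setting $u=xy$, $v=yz$, $w=zx$, this is precisely the classical symmetric inequality $u^2+v^2+w^2\geqslant uv+vw+wu$, since $uv+vw+wu=xyz(x+y+z)$. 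Hence $M=1$ for $D=3$ too, giving part~(ii).

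I do not foresee any serious obstacle; the only slightly delicate step is the symmetric-function inequality for the cross cases in $D=3$, which is a one-line consequence of $u^2+v^2+w^2\geqslant uv+vw+wu$. The one point to watch is that Lemma~\ref{thm10} assumes the states in $\SC$ are not maximally entangled, but that is fine because Theorem~\ref{thm9} together with Theorem~\ref{thm6} already handles the maximally entangled case for both $D=2$ and $D=3$.
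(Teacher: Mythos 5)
Your proposal is correct and follows the paper's own route: both reduce the theorem to showing that the minimum in Lemma~\ref{thm10}(iii) equals $1$, handle the $(g,h)$ pairs case by case (trivial when $g=e$ or $h=e$, then the diagonal and cross cases), and note that Theorem~\ref{thm7} supplies the group-shifted form needed for part (i) while Theorem~\ref{thm9} covers the maximally entangled case. The only difference is the elementary inequality certifying each nontrivial case --- you use Cauchy--Schwarz for $g=h\neq e$ and the symmetric inequality $u^2+v^2+w^2\geqslant uv+vw+wu$ for the $D=3$ cross case, whereas the paper applies the rearrangement inequality uniformly to the orderings of $\{1/\lambda_f\}$, $\{\lambda_f^2\}$, and $\{\lambda_f\lambda_g\}$; both are valid and of comparable weight.
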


The proof of both i) and ii) follows easily from Lemma~\ref{thm10}, iii), by applying the rearrangement inequality to \eqref{eqn61}, and is presented in Appendix~\ref{apdxC}. 

When $D=2$, or when $D=3$ and all phases $\expo{\ii\vartheta_{\!f\!,g}}=1$, an explicit protocol for cloning these states exists \cite{PhysRevA.73.012343} (alternatively, see the proof of our Theorem~\ref{thm8}), and therefore Theorem~\ref{thm11} becomes a necessary and sufficient condition for the local cloning of such states. In particular, together with Theorem \ref{thm7}, it provides a complete solution to the problem of local cloning when $D=2$.

\subsection{$D>3$, finite gap in the necessary entanglement}
For $D>3$, preliminary numerical studies indicate that the minimum \eqref{eqn61} in Lemma~\ref{thm10}, iii) is often equal to one, with few exceptions. It might be the case that a better  choice of $\{\mu_f\}$ in \eqref{eqn59} of Lemma~\ref{thm10}, ii) may provide the $1/D$ lower bound, but we were unable to prove this.

However, for any set of group-shifted states, we can prove that there is a rather sizeable gap between the entanglement needed in the blank state and the entanglement of the states of $\SC$, as stated by the following theorem.
\begin{theorem}[Finite gap]\label{thm12}
Let $\SC=\{\ket{\psi_f}^{AB}\}_{f\in G}$ be a set of $D$ group-shifted full Schmidt rank bipartite orthogonal entangled states on $\HC_A\otimes\HC_B$ as defined by \eqref{eqn54} and considered to be not maximally entangled. If the local cloning of $\SC$ using a blank state $\ket{\phi}^{ab}$ is possible by a separable operation, then
the entanglement of the blank state has to be strictly greater than the entanglement of the states in $\SC$, often by a wide margin. Specifically,
\begin{equation}\label{eqn62}
E(\ket{\phi}^{ab})\geqslant H(\{q_r\})>E(\ket{\psi_f}^{AB}), \forall f\in G,
\end{equation}
where $E(\cdot)$ denotes the entropy of entanglement and $H(\{q_r\})$ is the Shannon entropy of the probability distribution $\{q_r\}$, $q_r:=\sum_{f\in G} \lambda_f\lambda_{fr}$, $\sum_{r\in G}{q_r}=1$.
\end{theorem}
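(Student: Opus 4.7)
The plan is to split the theorem into its two inequalities, $E(\ket{\phi}^{ab})\geqslant H(\{q_r\})$ and $H(\{q_r\})>E(\ket{\psi_f}^{AB})$. The first I prove by applying monotonicity of the entropy of entanglement under separable operations (the result of \cite{PhysRevA.78.020304} already invoked in Theorem~\ref{thm9}) to a purification that coherently tags the superposition index on Alice's side. The second is a purely classical information-theoretic statement about the convolution of $\lambda$ with itself on the group $G$.

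For the first inequality, introduce an ancillary system $C$ of dimension $D$ held by Alice, and form the pure input
\begin{equation*}
\ket{\Xi_{\text{in}}} = \frac{1}{\sqrt{D}}\sum_{f\in G}\ket{\psi_f}^{AB}\otimes\ket{\phi}^{ab}\otimes\ket{f}^C.
\end{equation*}
Using $(A_m\otimes B_m)\ket{\psi_f}\ket{\phi}=\sqrt{p_m}\,\mathrm{e}^{\mathrm{i}\varphi_{mf}}\ket{\psi_f}\ket{\psi_f}$ and linearity, each Kraus branch yields the pure output
\begin{equation*}
\ket{\Xi_{\text{out}}^{(m)}}=\frac{1}{\sqrt{D}}\sum_{f\in G}\mathrm{e}^{\mathrm{i}\varphi_{mf}}\ket{\psi_f}^{AB}\otimes\ket{\psi_f}^{ab}\otimes\ket{f}^C
\end{equation*}
with probability $p_m$. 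The key step is to compute the Schmidt spectra of $\ket{\Xi_{\text{in}}}$ and $\ket{\Xi_{\text{out}}^{(m)}}$ across the $AaC/Bb$ cut. Labeling the $Bb$ computational basis by $(y,h)$ with $y=fg$ for the input and by $(y,z)=(fg,fh)$ for the output, then summing over the free group index at fixed $Bb$-label, one finds input eigenvalues $\gamma_h/D$ with multiplicity $D$ (giving entropy $H(\gamma)+\log D$), and output eigenvalues $q_{\overline{y}z}/D$ with multiplicity $D$ (giving entropy $H(\{q_r\})+\log D$), where the identity $\sum_f\lambda_{\overline{f}y}\lambda_{\overline{f}z}=q_{\overline{y}z}$ is just a change of group variable. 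Crucially, the output entropy does not depend on $m$, because the phases $\mathrm{e}^{\mathrm{i}\varphi_{mf}}$ and $\mathrm{e}^{\mathrm{i}\vartheta_{\!f\!,g}}$ only rotate the Schmidt basis on the $AaC$ side and leave the squared Schmidt magnitudes untouched. Monotonicity then gives $H(\gamma)+\log D\geqslant\sum_m p_m[H(\{q_r\})+\log D]=H(\{q_r\})+\log D$, whence $E(\ket{\phi}^{ab})=H(\gamma)\geqslant H(\{q_r\})$.

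For the second inequality, let $X,Y$ be independent random variables on $G$ with common distribution $\lambda$, so that $q_r=\Pr(X^{-1}Y=r)=\sum_f\lambda_f\lambda_{fr}$. Since $(X,X^{-1}Y)$ and $(X,Y)$ determine each other, the chain rule gives $H(\{q_r\})+H(X\mid X^{-1}Y)=H(X,X^{-1}Y)=H(X,Y)=2H(\lambda)$, and since $H(X\mid X^{-1}Y)\leqslant H(X)=H(\lambda)$, we obtain $H(\{q_r\})\geqslant H(\lambda)$. Equality forces $X$ and $X^{-1}Y$ to be independent, which, because $\lambda$ is strictly positive by the full Schmidt rank assumption, forces $\lambda_{xr}$ to be constant in $x$, so that $\lambda$ is uniform and the states in $\SC$ are maximally entangled. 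This case is excluded by hypothesis, so $H(\{q_r\})>H(\lambda)=E(\ket{\psi_f}^{AB})$, completing the proof.

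The main obstacle is ensuring that the output Schmidt spectrum across the $AaC/Bb$ cut is the same for every Kraus branch: without this, monotonicity would only bound an average of possibly distinct output entropies and could yield a conclusion strictly weaker than the pointwise inequality one needs. The choice of the $AaC/Bb$ cut with $C$ on Alice's side, together with the uniform superposition weights $1/\sqrt{D}$ in the purification, is precisely what makes the unknown measurement phases $\varphi_{mf}$ and the intrinsic phases $\vartheta_{\!f\!,g}$ drop out of the Schmidt magnitudes, while the permutation-by-$f$ structure of the group-shifted states folds the $f$-dependence into a uniform degeneracy of multiplicity $D$ rather than into distinct eigenvalues.
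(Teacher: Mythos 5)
Your proposal is correct. For the first inequality, $E(\ket{\phi}^{ab})\geqslant H(\{q_r\})$, you follow essentially the same route as the paper: the paper's Lemma~\ref{thm10} constructs exactly your tagged superposition (with general weights $\mu_f$, later specialized to $\mu_f=1/D$ in Appendix~\ref{apdxD}), computes the same Schmidt spectra $\gamma_h/D$ and $q_r/D$ across the same cut, and then invokes the majorization form of the monotonicity result of \cite{PhysRevA.78.020304} followed by Schur-concavity of the Shannon entropy, whereas you invoke the entropy-monotone form directly; both correctly rely on the earlier fact \eqref{eqn21} that $p_{mf}=p_m$ is independent of $f$, which you use implicitly by writing $\sqrt{p_m}$, and on the observation that the branch phases drop out of the Schmidt magnitudes. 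For the second, strict inequality $H(\{q_r\})>H(\{\lambda_f\})$, your argument is genuinely different: the paper proves the majorization $\vec{q}^{\downarrow}\prec\vec{\lambda}^{\downarrow}$ combinatorially (the partial sums of the $n$ largest $\lambda$'s dominate any reordering of partial sums of $\vec q$) and then applies strict Schur-concavity, which yields the gap for \emph{every} strictly Schur-concave entanglement measure, while you prove only the Shannon-entropy statement via the chain rule applied to $(X,X^{-1}Y)\leftrightarrow(X,Y)$ for i.i.d.\ $X,Y\sim\lambda$. Your equality analysis is clean and arguably tighter than the paper's: independence of $X$ and $X^{-1}Y$ together with strict positivity of $\lambda$ (guaranteed by full Schmidt rank) forces $\lambda_{xr}=q_r$ for all $x$, hence uniformity, whereas the paper asserts without detail that $\vec q$ can be a permutation of $\vec\lambda$ only when all $\lambda$'s are equal. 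The trade-off is that the paper's majorization statement is strictly more information than your entropy inequality, but your route is more self-contained and elementary for the specific claim in \eqref{eqn62}.
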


The proof follows by setting $\mu_f=1/D$ in  Lemma~\ref{thm10}, i), but is rather long and is presented in Appendix~\ref{apdxD}.

\section{Conclusion and open questions\label{sct6}}
We have investigated the problem of local cloning of a set $\SC$ of bipartite $D\times D$  entangled states by separable operations, at least one of which is full Schmidt rank. We proved that all states in $\SC$ must be full rank and that the maximal set of clonable states must be generated by a finite group $G$ of order $N$, the number of states in this maximal set, and then we showed that $N$ has to divide $D$ exactly. We further proved that all states in $\SC$ must be equally entangled with respect to the $G$-concurrence measure, and this implied that any two states in $\SC$ must either share the same set of Schmidt coefficients or otherwise be incomparable under majorization.

We have completely solved two important problems in local cloning. For $D=2$ (entangled qubits), we proved that no more than two states can be locally cloned, and that these states must be locally-shifted. We showed that a two-qubit maximally entangled state is a necessary and sufficient resource for such a cloning. In addition, we provided necessary and sufficient conditions when the states are maximally entangled, valid for any dimension $D$, showing that the states must be group-shifted, and then we also provided an LOCC protocol that clones such a set of states. 

We have studied in detail the local cloning of partially entangled group-shifted states and provided an explicit protocol for local cloning of such states with a maximally entangled resource. For $D=3$ (entangled qutrits) we showed that a maximally entangled blank state is also necessary and sufficient, whereas for $D>3$ we proved that the blank state has to be strictly more entangled than any state in $\SC$, often by a sizeable amount.

The necessary form of the clonable states for $D>2$ remains an open problem. One might guess that the states have to be of a group-shifted form, but a proof of such a claim is not presently available. Although we proved the necessity of a maximally entangled resource for the $D=2$ case and for group-shifted states in the $D=3$ case, in higher dimensions it is still not clear if a maximally entangled state of Schmidt rank $D$ is always necessary. Finally it would be of interest to investigate the local cloning of less than full Schmidt rank states, a problem that is likely to bring in additional complications, such as the possibility of first distinguishing amongst the states in $\SC$ while preserving the states intact \cite{PhysRevA.75.052313}, and then once the state is known, the cloning becomes straightforward with a blank state having Schmidt coefficients that are majorized by those of each of the states in $\SC$ \cite{PhysRevLett.83.436,PhysRevA.78.020304}.

\begin{acknowledgments}
The research described here received support from the National Science Foundation through Grant No. PHY-0757251. SMC has also been supported by a grant from the Research Corporation.
\end{acknowledgments}

\appendix

\section{Mathematical proofs}

\subsection{Proof of Lemma~\ref{thm5}}\label{apdxA}
Consider the singular value decomposition of $S$, $S=V\DC U$ with $\DC$ diagonal and positive definite, and $V$ and $U$ unitary operators. Using this expression for $S$ in $T_f=SL(f)S^{-1}$ shows that
\begin{align}\label{eqn63}
V^\dagger T_fV=\DC (UL(f)U^\dagger)\DC^{-1},
\end{align}
or with $\tilde T_f=V^\dagger T_fV$ and $\tilde L(f)=UL(f)U^\dagger$,
\begin{align}\label{eqn64}
\tilde T_f\DC=\DC\tilde L(f).
\end{align}
Left-multiply (or right-multiply) each side of this equation with the respective
adjoint ($\DC^\dag \tilde T_f^\dag$ and $\tilde L(f)^\dag \DC^\dag$),
and using the fact that $\tilde T_f$ and $\tilde L(f)$ are both unitary, we have that
$\tilde T_f$ and $\tilde L(f)$ each commutes with $\DC^\dagger\DC=\DC^2$. That is,
\begin{align}\label{eqn65}
\DC_i^2[\tilde T_f]_{ij}=[\tilde T_f]_{ij}\DC_j^2\notag\\
\DC_i^2[\tilde L(f)]_{ij}=[\tilde L(f)]_{ij}\DC_j^2,
\end{align}
from which we conclude that when $\DC_i\ne \DC_j$, $[\tilde T_f]_{ij}=0=[\tilde L(f)]_{ij}$. By a judicious choice of $U$ and $V$, we may arrange for $\DC$ to be a direct sum of scalar matrices (some may be one-dimensional). That is, $\DC=\oplus_\nu\alpha_\nu I_\nu$, and then we see that $T_f$ and $L(f)$ share the same block-diagonal structure, with blocks corresponding to this direct sum decomposition of $\DC$.

We also have directly from \eqref{eqn64} that
\begin{equation}\label{eqn66}
[\tilde T_f]_{ij}\DC_j=\DC_i[\tilde L(f)]_{ij}.
\end{equation}
Therefore, when $\DC_j=\DC_i$, $[\tilde T_f]_{ij}=[\tilde L(f)]_{ij}$, and we see that the blocks of $\tilde T_f$ are identical to those of $\tilde L(f)$. In other words, we have shown that $\tilde T_f=\tilde L(f)$ or equivalently, $T_f=WL(f)W^\dagger$ with $W=VU$, completing the proof.

\subsection{Proof of Lemma~\ref{thm10}\label{apdxB}}

\textbf{Proof of i)}
Let us introduce an ancillary system $\HC_E$ of dimension $D$ on Alice's side and 
construct the superposition
\begin{equation}\label{eqn67}
\ket{\Psi_\textrm{in}}^{ABabE}:=\sum_{f\in G}\sqrt{\mu_f}
\ket{\psi_f}^{AB}\otimes\ket{\phi}^{ab}\otimes\ket{f}^E,
\end{equation}
with $\{\mu_f\}_{f\in G}$ an arbitrary set of non-negative real coefficients that satisfy $\sum_{f}\mu_f=1$. The proof is based on the fact that if $\ket{\psi_f}^{AB}\otimes\ket{\phi}^{ab}$ is deterministically mapped to $\mathrm{e}^{\mathrm{i}\varphi_{m\!f}}\ket{\psi_f}^{AB}\otimes\ket{\psi_f}^{ab}$ (see \eqref{eqn11}), then
$\ket{\Psi_\mathrm{in}}^{ABabE}$ will be deterministically mapped to an ensemble
$\{p_m,\ket{\Psi_{m,\textrm{out}}}^{AaBbE}\}$, where
\begin{equation}\label{eqn68}
\ket{\Psi_{m,\textrm{out}}}^{AaBbE}=\sum_{f\in G}\mathrm{e}^{\mathrm{i}\varphi_{m\!f}}\sqrt{\mu_f}
\ket{\psi_f}^{AB}\otimes\ket{\psi_f}^{ab}\otimes\ket{f}^E.
\end{equation}
Note that this conclusion rests crucially on the fact, discovered in the main text, that $p_{mf}=p_m$, independent of $f$. 

Let us now write $\ket{\Psi_\mathrm{in}}^{ABabE}$ in Schmidt form over the $AaE/Bb$ cut. One has (again we use $\overline f=f^{-1}$)
\begin{widetext}
\begin{align}\label{eqn69}
\ket{\Psi_\mathrm{in}}^{ABabE}&=\sum_{f\in G}\sqrt{\mu_f}
\left(
\sum_{g,h\in G}\expo{\ii\vartheta_{\!f\!,g}}\sqrt{\lambda_{g}\gamma_{h}}\,\ket{g}^A\ket{fg}^B
\ket{h}^a\ket{h}^b
\right)
\ket{f}^E\notag\\
&=\sum_{f,g,h\in G}\expo{\ii\vartheta_{\!f\!,g}}\sqrt{\mu_f\lambda_g\gamma_{h}}\,\ket{g}^A\ket{h}^a\ket{f}^E\otimes
\ket{fg}^B\ket{h}^b\notag\\
&=\sum_{g,h\in G}\left(
\sum_{f\in G}
\expo{\ii\vartheta_{f,\overline f g}}\sqrt{\mu_{f}\lambda_{\overline fg}\gamma_h}\ket{\overline f g}^{A}\ket{f}^{E}
\right)\ket{h}^a
\otimes
\ket{g}^B 
\ket{h}^b\notag\\
&=\sum_{g,h\in G}\left(
\sum_{f\in G}
\expo{\ii\vartheta_{\overline f,f g}}\sqrt{\mu_{\overline f}\lambda_{fg}\gamma_h}
\ket{f g}^{A}\ket{\overline f}^{E}
\right)\ket{h}^a
\otimes
\ket{g}^B 
\ket{h}^b,
\end{align}
\end{widetext}
where  we used the group property of $G$ and replaced $g$ by $\overline f g$ and summation over $f$ by summation over $\overline f$ where necessary.
The states on the $AaE$ system are orthogonal for different pairs of $g,h$, and therefore \eqref{eqn69} represents a Schmidt decomposition, with Schmidt coefficients $\alpha_{g,h}$ given by the squared norm of the states on the $AaE$ system,
\begin{equation}\label{eqn70}
\alpha_{g,h}=\gamma_h\sum_{f\in G}\mu_{\overline f}\lambda_{fg}.
\end{equation}
A similar calculation yields for the Schmidt coefficients $\beta_{g,h}$ of $\ket{\Psi_{m,\mathrm{out}}}^{ABabE}$ the expression
\begin{equation}\label{eqn71}
\beta_{g,h}=\sum_{f\in G}\mu_{\overline f}\lambda_{fg}\lambda_{fh},
\end{equation}
independent of $m$, which means that the average Schmidt vector of the output ensemble under the $Aa/BbE$ cut is the same as the Schmidt vector of an individual state $\ket{\Psi_{m,\mathrm{out}}}^{ABabE}$.

We have proven in \cite{PhysRevA.78.020304} that the average Schmidt vector of the output ensemble produced by a separable operation acting on a pure state has to majorize the input Schmidt vector, and this concludes i).

\textbf{Proof of ii)}
The proof follows as a direct consequence of i). A particular majorization inequality imposed by Lemma~\ref{thm10}~i) requires that the smallest Schmidt coefficients $\alpha_{\min}$ and $\beta_{\min}$  have to satisfy
\begin{equation}\label{eqn72}
\alpha_{\min}\geqslant\beta_{\min},
\end{equation}
where $\alpha$ and $\beta$ were defined in \eqref{eqn70} and \eqref{eqn71}, respectively. 
This is equivalent to
\begin{equation}\label{eqn73}
\gamma_{\min}\geqslant
\frac{\min_{g,h\in G}\sum_{f\in G}\mu_{\overline f}\lambda_{fg}\lambda_{fh}}
{\min_{g\in G}\sum_{f\in G}\mu_{\overline f}\lambda_{fg}}.
\end{equation}
The above equation must hold regardless of which set of $\{\mu_f\}$ was chosen, hence taking the maximum over all possible sets $\{\mu_f\}$ concludes the proof of ii).

\textbf{Proof of iii)}
Inserting the expression \eqref{eqn60} for $\{\mu_f\}$ in \eqref{eqn73} yields
\begin{align}\label{eqn74}
\gamma_{\min}&\geqslant
\frac{\min_{g,h\in G}\sum_{f\in G}\frac{1}{\lambda_{f}}\lambda_{fg}\lambda_{fh} }
{\min_{g\in G}\sum_{f\in G}\frac{1}{\lambda_f}\lambda_{fg}}\\
\label{eqn75}
&=\frac{1}{D}\min_{g,h\in G}\sum_{f\in G}\frac{1}{\lambda_{f}}\lambda_{fg}\lambda_{fh},
\end{align}
where \eqref{eqn75} follows from applying the rearrangement inequality to the denominator in \eqref{eqn74}, which in this case reads as
\begin{equation}\label{eqn76}
\min_{g\in G}\sum_{f\in G}\frac{1}{\lambda_f}\lambda_{fg}=\sum_{f\in G}\frac{1}{\lambda_f}\lambda_{f}=D.
\end{equation}

\subsection{Proof of Theorem~\ref{thm11}\label{apdxC}}

\textbf{Proof of i)}
In this case the group $G$ is the cyclic group of order 2, and we identify its group elements by $\{0,1\}$. We proved in Theorem~\ref{thm7} that the qubit states have to be locally shifted. 
The minimum in \eqref{eqn61} of Lemma~\ref{thm10}, iii) becomes explicitly a minimum over 4 quantities that correspond to all possible pairings of $g,h$; a straightforward  calculation shows that 3 out of these 4 quantities are equal to 1, except for $g=h=1$, in which case the sum in \eqref{eqn75} equals 
${\lambda_1^2}/{\lambda_0}+{\lambda_0^2}/{\lambda_1}$. 
Order the $\lambda$'s such that $\lambda_0\geqslant \lambda_1$ and note that
\begin{align}\label{eqn77}
&\frac{1}{\lambda_0}\leqslant\frac{1}{\lambda_1}\text{ and }\\
\label{eqn78}
&\lambda_1^2\leqslant \lambda_0^2.
\end{align}
From the rearrangement inequality applied to \eqref{eqn77} and \eqref{eqn78} it follows that
\begin{equation}\label{eqn79}
\frac{\lambda_1^2}{\lambda_0}+\frac{\lambda_0^2}{\lambda_1}\geqslant\frac{\lambda_0^2}{\lambda_0}+\frac{\lambda_1^2}{\lambda_1}=1,
\end{equation}
and hence the minimum in case i) equals 1.

\textbf{Proof of ii)}
Now the group $G$ is isomorphic to the cyclic group of order 3 and again we identify its elements by $\{0,1,2\}$. We order the $\lambda$'s such that $\lambda_0\geqslant\lambda_1\geqslant\lambda_2$.  The minimum in \eqref{eqn75} is now taken over $9$ possible pairs $g,h$. Again straightforward algebra shows that most expressions sum up to $1$, except for the following three cases for which we show that the sum exceeds $1$.
\begin{enumerate}
\item $g=h=1$, for which the sum in \eqref{eqn75} equals $\lambda_1^2/\lambda_0+\lambda_2^2/\lambda_1+\lambda_0^2/\lambda_2$;
\item $g=h=2$, for which the sum in \eqref{eqn75} equals $\lambda_2^2/\lambda_0+\lambda_0^2/\lambda_1+\lambda_1^2/\lambda_2$;
\item $g=1,h=2$ or $g=2,h=1$, for which the sum in \eqref{eqn75} equals $\lambda_1\lambda_2/\lambda_0+\lambda_2\lambda_0/\lambda_1+\lambda_0\lambda_1/\lambda_2$.
\end{enumerate}
Note first that 
\begin{align}\label{eqn80}
&\frac{1}{\lambda_0}\leqslant \frac{1}{\lambda_1}\leqslant \frac{1}{\lambda_2}\\
\label{eqn81}
&{\lambda_2}^2\leqslant {\lambda_1}^2\leqslant {\lambda_0}^2\text{ and }\\
\label{eqn82}
&\lambda_1\lambda_2\leqslant \lambda_2\lambda_0 \leqslant \lambda_0\lambda_1.
\end{align}

From the rearrangement inequality applied to \eqref{eqn80} and \eqref{eqn81} it follows  that
\begin{align}\label{eqn83}
&\frac{1}{\lambda_0}\lambda_1^2+\frac{1}{\lambda_1}\lambda_2^2+\frac{1}{\lambda_2}\lambda_0^2\geqslant\notag\\
&\geqslant \frac{1}{\lambda_0}\lambda_0^2+\frac{1}{\lambda_1}\lambda_1^2+\frac{1}{\lambda_2}\lambda_2^2=1,
\end{align}
which proves case 1,
and
\begin{align}\label{eqn84}
&\frac{1}{\lambda_0}\lambda_2^2+\frac{1}{\lambda_1}\lambda_0^2+\frac{1}{\lambda_2}\lambda_1^2\geqslant\notag\\
&\geqslant \frac{1}{\lambda_0}\lambda_0^2+\frac{1}{\lambda_1}\lambda_1^2+\frac{1}{\lambda_2}\lambda_2^2=1,
\end{align}
which proves case 2.

Next apply the rearrangement inequality to \eqref{eqn80} and \eqref{eqn82} to get
\begin{align}\label{eqn85}
&\frac{1}{\lambda_0}(\lambda_1\lambda_2)+\frac{1}{\lambda_1}(\lambda_2\lambda_0)+\frac{1}{\lambda_2}(\lambda_0\lambda_1) \notag\\
&\geqslant
\frac{1}{\lambda_0}\lambda_0\lambda_1+\frac{1}{\lambda_1}\lambda_1\lambda_2+\frac{1}{\lambda_2}\lambda_0\lambda_2=1
\end{align}
and this proves case 3.

\subsection{Proof of Theorem~\ref{thm12}\label{apdxD}}
By setting $\mu_f=1/D$ in Lemma~\ref{thm10}, i), for all $f\in G$, the majorization relation \eqref{eqn57} reads as
\begin{equation}\label{eqn86}
\frac{1}{D}\vec{\gamma}\times\vec{1}\prec \vec{\beta},
\end{equation}
where $(1/D)\vec{\gamma}\times\vec{1}$ represents a $D^2$ component vector with components 
$\gamma_h/D$, each component repeated $D$ times; here $\vec{\gamma}$ is the Schmidt vector of the blank state $\ket{\phi}^{ab}$.
The $D^2$ components $\beta_{g,h}$ of $\vec{\beta}$ are given by
\begin{equation}\label{eqn87}
\beta_{g,h}=\frac{1}{D}\sum_{f\in G}\lambda_{fg}\lambda_{fh}=\frac{1}{D}\sum_{f\in G}\lambda_{f}\lambda_{f\overline g h}.
\end{equation}
Note that it is also the case that $\beta$ has $D$ components each repeated $D$ times, so the majorization relation \eqref{eqn86} implies a majorization relation between 2 $D$-component vectors
\begin{equation}\label{eqn88}
\vec{\gamma}\prec\vec{q},
\end{equation} 
where the $r$-th component of $\vec{q}$ is given by
\begin{equation}\label{eqn89}
q_r:=D\cdot\beta_{g,h}|_{\overline gh=r}=\sum_{f\in G}\lambda_{f}\lambda_{fr}.
\end{equation}
Note that both $\vec{\gamma}$ and $\vec{q}$ are normalized probability vectors.
Since the Shannon entropy is a Schur-concave function, \eqref{eqn88}  implies at once that 
\begin{equation}\label{eqn90}
E(\ket{\phi}^{ab})\geqslant H(\{q_r\}).
\end{equation}

We now show that the second inequality in \eqref{eqn62} is strict. First we will prove that the ordered vector of probabilities $\vec{q}^{\downarrow}$ with components defined in \eqref{eqn89} and decreasing magnitudes of entries down its column, is majorized 
by  $\vec{\lambda}^{\downarrow}$, the ordered vector of the $\lambda_f$,
\begin{equation}\label{eqn91}
\vec{q}^\downarrow\prec\vec{\lambda}^\downarrow.
\end{equation}
Since the Shannon entropy is not just Schur-concave, but strictly Schur-concave, this will imply at once that 
\begin{equation}\label{eqn92}
H(\{q_r\})\geqslant H(\{\lambda_f\})=E(\ket{\psi_f}^{AB}),~\forall f\in G,
\end{equation}
with equality if and only if $\vec{q}^\downarrow$
equals $\vec{\lambda}^\downarrow$ (or, equivalently, if and only if the unordered vector $\vec{q}$ is the same as $\vec{\lambda}$ up to a permutation). One can see that $\vec{q}$ is not a permutation of $\vec{\lambda}$ unless all $\lambda$'s are equal, case that we exclude. Hence, once we show the majorization condition \eqref{eqn91} holds, the proof will be complete.

We will actually show that $\vec\lambda^{\downarrow}$ majorizes every vector $\vec{q}$ of the $q_r$'s no matter how $\vec q$ is ordered. Denote by $S_n$, with $|S_n|=n$ and $n=1,\cdots,D-1$, the subset consisting of those elements $f\in G$ such that $\lambda_f$ is one of the largest $n$ of the $\lambda$'s. Then, we need to show that for each $n$,
\begin{eqnarray}
\label{eqn93}
\sum_{g\in S_n}\lambda_g\geqslant\sum_{g\in S_n}q_{\sigma(g)}=\sum_{g\in S_n}\sum_{f\in G}\lambda_f\lambda_{f\sigma(g)},
\end{eqnarray}
where $\sigma$ is an arbitrary permutation of the group elements. Since $\sum_f\lambda_f=1$, this is equivalent to
\begin{eqnarray}
\label{eqn94}
\sum_{f\in G}\lambda_f\left[\sum_{g\in S_n}\lambda_g-\sum_{g\in S_n}\lambda_{f\sigma(g)}\right]\geqslant0.
\end{eqnarray}
However, given the way we have defined $S_n$, it is always true that the quantity in square brackets is non-negative. The reason is that the first term in this quantity is the sum of the $n$ largest of the $\lambda$'s. Therefore the second term, which is also a sum of $n$ of the $\lambda$'s, cannot possibly be greater than the first. In fact, it is clear that for general sets of Schmidt coefficients $\{\lambda_f\}$, the quantity in square brackets will not be particularly small, implying that the gap between the required entanglement of the blank state and the entanglement of the states in $\SC$ will be sizable. This ends the proof.

\section{$|G|<D$ case\label{apdxE}}
In the main body of the current paper, we restricted our consideration to the $|G|=D$ case. All of our results remain valid also when $|G|<D$, with minor modifications. Briefly, when $|G|<D$, $T_f^{(m)}$ is a direct sum of $n_t=D/|G|$ copies of $L(f)$, and the following Theorems/Lemmas have to be modified accordingly.

\textbf{Theorem~\ref{thm6}.}

Since Lemma~\ref{thm5} holds for any two unitary representations, it will hold when the regular representation $L(f)$ is replaced by a direct sum of a number of copies of $L(f)$. In this case, the maximally entangled group-shifted states \eqref{eqn35} and \eqref{eqn36} of Theorem~\ref{thm6} have the form
\begin{equation}
\label{eqn95}
\ket{\psi_f}^{AB} = \frac{1}{\sqrt{D}}\sum_{n=1}^{n_t}\sum_{g\in G}\ket{fg,n}^{A}\ket{g,n}^{B},
\end{equation}
or
\begin{equation}
\label{eqn96}
\ket{\psi_f}^{AB} = \frac{1}{\sqrt{D}}\sum_{n=1}^{n_t}\sum_{g\in G}\ket{g,n}^{A}\ket{fg,n}^{B},
\end{equation}
respectively. Here the states $\{\ket{g,n}\}_{g\in G,n=1,\ldots,n_t}$ are an orthonormal basis, $\ip{g,n}{h,m}=\delta_{g,h}\delta_{n,m}$. The symbols $f,g\in G$ label the group elements and $m,n=1,\ldots,n_t$ label the copies of the regular representation.

\textbf{Theorem~\ref{thm8}.}

When the family of partially entangled group-shifted states \eqref{eqn45} is replaced by
\begin{equation}\label{eqn97}
\ket{\psi_f}^{AB} = \sum_{n=1}^{n_t}\sum_{g\in G}\sqrt{\lambda_{g,n}}\ket{g,n}^{A}\ket{fg,n}^{B}
\end{equation}
and the maximally entangled blank state \eqref{eqn46} is modified to
\begin{equation}\label{eqn98}
\ket{\phi}^{ab}=\frac{1}{\sqrt{D}}
\sum_{m=1}^{n_t}\sum_{h\in G}\ket{h,m}^{a}\ket{h,m}^{b},
\end{equation}
the local cloning protocol of Theorem~\ref{thm8} continues to work, provided that
\begin{enumerate}
\item The controlled-group unitary \eqref{eqn47} is replaced by
\begin{align}\label{eqn99}
&\sum_{n=1}^{n_t}\sum_{g\in G} \dyad{g,n}{g,n} \otimes P_{g},\text{with}\notag\\
&P_{g}=\sum_{m=1}^{n_t}\sum_{h\in G}\dyad{gh,m}{h,m}.
\end{align}
\item The measurement \eqref{eqn49} Alice performs is changed to
\begin{align}\label{eqn100}
			M_{r}=\sum_{m=1}^{n_t}\frac{1}{(\sum_{k\in G}\lambda_{k,m})^{1/2}}\sum_{h\in G}\sqrt{\lambda_{hr,m}}\dyad{h,m}{h,m}.
\end{align}
where the factor involving the sum over $k$ is needed to insure that this set of measurement operators corresponds to a complete measurement.
\item Finally the unitary correction \eqref{eqn51} Alice and Bob perform is modified to
\begin{equation}\label{eqn101}
Q_r=\sum_{m=1}^{n_t}\sum_{h\in G}\dyad{hr,m}{h,m}.
\end{equation}
\end{enumerate}

\textbf{Lemma~\ref{thm10}.}

First the blank state has to be modified to
\begin{equation}\label{eqn102}
\ket{\phi}^{ab}=\frac{1}{\sqrt{D}}
\sum_{m=1}^{n_t}\sum_{h\in G}\sqrt{\gamma_{h,m}}\ket{h,m}^{a}\ket{h,m}^{b}.
\end{equation}
Next we follow the line of thought in Appendix~\ref{apdxB}. Even though there are only $|G|<D$ states in the clonable set $\SC$, we still use a $D$ dimensional ancillary  system $\HC_E$ on Alice's side, with a basis now given by $\{\ket{f,n}^E\}_{f\in G,n=1,\ldots,n_t}$. Restricting to an ancillary system of dimension $|G|$ leads to unnecessary complications, since the rearrangement inequality can no longer be applied in part ii) to obtain iii). 

We consider again an input superposition 
\begin{equation}\label{eqn103}
	\sum_{n=1}^{n_t}\sum_{f\in G}\sqrt{\mu_{f,n}}\ket{\psi_f}^{AB}\otimes\ket{\phi}^{ab}\otimes\ket{f,n}^E
\end{equation}
and look at the Schmidt vector of the output ensemble produced by the separable operation acting on \eqref{eqn103}, where $\{\mu_{f,n}\}$ is an arbitrary set of coefficients satisfying $\sum_{n=1}^{n_t}\sum_{f\in G}\mu_{f,s}=1$. 
We then have:
\begin{itemize}
	\item[i)] The majorization condition $\vec{\alpha}\prec\vec{\beta}$ corresponding to \eqref{eqn57} holds, provided the vectors $\vec{\alpha}$ and $\vec{\beta}$ in \eqref{eqn58} are redefined as
	\begin{align}\label{eqn104}
		\alpha_{g,h}^{n,m}&=\gamma_{h,m}\sum_{s=1}^{n_t}\sum_{f\in G}\mu_{\overline f,s}\lambda_{fg,n},\notag\\
		\beta_{g,h}^{n,m}&=\sum_{s=1}^{n_t}\sum_{f\in G}\mu_{\overline{f},s}\lambda_{fg,n}\lambda_{fh,m}.
	\end{align}
	
\item[ii)] The smallest Schmidt coefficient $\gamma_{\min}$ of the blank has to satisfy
\begin{equation}\label{eqn105}
\gamma_{\min}\geqslant
\max_{\{\mu_{f,s}\}}\frac{\min_{m,n}\min_{g,h\in G}\sum_{s=1}^{n_t}\sum_{f\in G}\mu_{\overline f,s}\lambda_{fg,n}\lambda_{fh,m}}
{\min_{n}\min_{g\in G}\sum_{s=1}^{n_t}\sum_{f\in G}\mu_{\overline f, s}\lambda_{fg,n}}.
\end{equation}

\item[iii)] A good choice of $\{\mu_{f,s}\}$ is given by $\mu_{f,s}=1/\lambda_{\overline f,s}$ (ignore the normalization, since $\mu_{f,s}$ appears both on the numerator and denominator of \eqref{eqn105}). Then \eqref{eqn105} becomes
\begin{equation}\label{eqn106}
\gamma_{\min}\geqslant\frac{1}{|D|}\min_{m,n}\min_{g,h\in G}\sum_{s=1}^{n_t}\sum_{f\in G}\frac{1}{\lambda_{\overline f,s}}\lambda_{fg,n}\lambda_{fh,m}.
\end{equation}
\end{itemize}

\textbf{Theorem~\ref{thm12}.}

Theorem~\ref{thm12} still provides a finite gap between the entanglement needed in the blank state and the entanglement of group shifted states \eqref{eqn97}. The proof follows the same ideas as before, by setting $\mu_{f,s}=1/D$, for all $f\in G$ and $s=1,\ldots,n_t$ in the majorization relation of the ``modified" Lemma~\ref{thm10},i) above.


\begin{thebibliography}{23}%
\makeatletter
\providecommand \@ifxundefined [1]{%
 \@ifx{#1\undefined}
}%
\providecommand \@ifnum [1]{%
 \ifnum #1\expandafter \@firstoftwo
 \else \expandafter \@secondoftwo
 \fi
}%
\providecommand \@ifx [1]{%
 \ifx #1\expandafter \@firstoftwo
 \else \expandafter \@secondoftwo
 \fi
}%
\providecommand \natexlab [1]{#1}%
\providecommand \enquote  [1]{``#1''}%
\providecommand \bibnamefont  [1]{#1}%
\providecommand \bibfnamefont [1]{#1}%
\providecommand \citenamefont [1]{#1}%
\providecommand \href@noop [0]{\@secondoftwo}%
\providecommand \href [0]{\begingroup \@sanitize@url \@href}%
\providecommand \@href[1]{\@@startlink{#1}\@@href}%
\providecommand \@@href[1]{\endgroup#1\@@endlink}%
\providecommand \@sanitize@url [0]{\catcode `\\12\catcode `\$12\catcode
  `\&12\catcode `\#12\catcode `\^12\catcode `\_12\catcode `\%12\relax}%
\providecommand \@@startlink[1]{}%
\providecommand \@@endlink[0]{}%
\providecommand \url  [0]{\begingroup\@sanitize@url \@url }%
\providecommand \@url [1]{\endgroup\@href {#1}{\urlprefix }}%
\providecommand \urlprefix  [0]{URL }%
\providecommand \Eprint [0]{\href }%
\@ifxundefined \urlstyle {%
  \providecommand \doi  [0]{\begingroup \@sanitize@url \@doi}%
  \providecommand \@doi [1]{\endgroup \@@startlink {\doibase
  #1}doi:\discretionary {}{}{}#1\@@endlink }%
}{%
  \providecommand \doi  [0]{doi:\discretionary{}{}{}\begingroup
  \urlstyle{rm}\Url }%
}%
\providecommand \doibase [0]{http://dx.doi.org/}%
\providecommand \Doi [0]{\begingroup \@sanitize@url \@Doi }%
\providecommand \@Doi  [1]{\endgroup\@@startlink{\doibase#1}\@@Doi}%
\providecommand \@@Doi [1]{#1\@@endlink}%
\providecommand \selectlanguage [0]{\@gobble}%
\providecommand \bibinfo  [0]{\@secondoftwo}%
\providecommand \bibfield  [0]{\@secondoftwo}%
\providecommand \translation [1]{[#1]}%
\providecommand \BibitemOpen [0]{}%
\providecommand \bibitemStop [0]{}%
\providecommand \bibitemNoStop [0]{.\EOS\space}%
\providecommand \EOS [0]{\spacefactor3000\relax}%
\providecommand \BibitemShut  [1]{\csname bibitem#1\endcsname}%
\bibitem [{\citenamefont {Wootters}\ and\ \citenamefont
  {Zurek}(1982)}]{Nature.299.802}%
  \BibitemOpen
  \bibfield  {author} {\bibinfo {author} {\bibfnamefont {W.~K.}\ \bibnamefont
  {Wootters}}\ and\ \bibinfo {author} {\bibfnamefont {W.~H.}\ \bibnamefont
  {Zurek}},\ }\Doi {10.1038/299802a0} {\bibfield  {journal} {\bibinfo
  {journal} {Nature},\ }\textbf {\bibinfo {volume} {299}},\ \bibinfo {pages}
  {802} (\bibinfo {year} {1982})}\BibitemShut {NoStop}%
\bibitem [{\citenamefont {Ghosh}\ \emph {et~al.}(2004)\citenamefont {Ghosh},
  \citenamefont {Kar},\ and\ \citenamefont {Roy}}]{PhysRevA.69.052312}%
  \BibitemOpen
  \bibfield  {author} {\bibinfo {author} {\bibfnamefont {S.}~\bibnamefont
  {Ghosh}}, \bibinfo {author} {\bibfnamefont {G.}~\bibnamefont {Kar}}, \ and\
  \bibinfo {author} {\bibfnamefont {A.}~\bibnamefont {Roy}},\ }\Doi
  {10.1103/PhysRevA.69.052312} {\bibfield  {journal} {\bibinfo  {journal}
  {Phys. Rev. A},\ }\textbf {\bibinfo {volume} {69}},\ \bibinfo {eid} {052312}
  (\bibinfo {year} {2004})}\BibitemShut {NoStop}%
\bibitem [{\citenamefont {Anselmi}\ \emph {et~al.}(2004)\citenamefont
  {Anselmi}, \citenamefont {Chefles},\ and\ \citenamefont
  {Plenio}}]{NewJPhys.6.164}%
  \BibitemOpen
  \bibfield  {author} {\bibinfo {author} {\bibfnamefont {F.}~\bibnamefont
  {Anselmi}}, \bibinfo {author} {\bibfnamefont {A.}~\bibnamefont {Chefles}}, \
  and\ \bibinfo {author} {\bibfnamefont {M.~B.}\ \bibnamefont {Plenio}},\
  }\href {http://www.iop.org/EJ/abstract/1367-2630/6/1/164} {\bibfield
  {journal} {\bibinfo  {journal} {New J. Phys.},\ }\textbf {\bibinfo {volume}
  {6}},\ \bibinfo {pages} {164} (\bibinfo {year} {2004})}\BibitemShut {NoStop}%
\bibitem [{\citenamefont {Owari}\ and\ \citenamefont
  {Hayashi}(2006)}]{PhysRevA.74.032108}%
  \BibitemOpen
  \bibfield  {author} {\bibinfo {author} {\bibfnamefont {M.}~\bibnamefont
  {Owari}}\ and\ \bibinfo {author} {\bibfnamefont {M.}~\bibnamefont
  {Hayashi}},\ }\Doi {10.1103/PhysRevA.74.032108} {\bibfield  {journal}
  {\bibinfo  {journal} {Phys. Rev. A},\ }\textbf {\bibinfo {volume} {74}},\
  \bibinfo {eid} {032108} (\bibinfo {year} {2006})}\BibitemShut {NoStop}%
\bibitem [{\citenamefont {Kay}\ and\ \citenamefont
  {Ericsson}(2006)}]{PhysRevA.73.012343}%
  \BibitemOpen
  \bibfield  {author} {\bibinfo {author} {\bibfnamefont {A.}~\bibnamefont
  {Kay}}\ and\ \bibinfo {author} {\bibfnamefont {M.}~\bibnamefont {Ericsson}},\
  }\Doi {10.1103/PhysRevA.73.012343} {\bibfield  {journal} {\bibinfo  {journal}
  {Phys. Rev. A},\ }\textbf {\bibinfo {volume} {73}},\ \bibinfo {eid} {012343}
  (\bibinfo {year} {2006})}\BibitemShut {NoStop}%
\bibitem [{\citenamefont {Choudhary}\ \emph
  {et~al.}(2007){\natexlab{a}}\citenamefont {Choudhary}, \citenamefont
  {Kunkri}, \citenamefont {Rahaman},\ and\ \citenamefont
  {Roy}}]{PhysRevA.76.052305}%
  \BibitemOpen
  \bibfield  {author} {\bibinfo {author} {\bibfnamefont {S.~K.}\ \bibnamefont
  {Choudhary}}, \bibinfo {author} {\bibfnamefont {S.}~\bibnamefont {Kunkri}},
  \bibinfo {author} {\bibfnamefont {R.}~\bibnamefont {Rahaman}}, \ and\
  \bibinfo {author} {\bibfnamefont {A.}~\bibnamefont {Roy}},\ }\Doi
  {10.1103/PhysRevA.76.052305} {\bibfield  {journal} {\bibinfo  {journal}
  {Phys. Rev. A},\ }\textbf {\bibinfo {volume} {76}},\ \bibinfo {eid} {052305}
  (\bibinfo {year} {2007}{\natexlab{a}})}\BibitemShut {NoStop}%
\bibitem [{\citenamefont {Choudhary}\ \emph
  {et~al.}(2007){\natexlab{b}}\citenamefont {Choudhary}, \citenamefont {Kar},
  \citenamefont {Kunkri}, \citenamefont {Rahaman},\ and\ \citenamefont
  {Roy}}]{PhysRevA.76.062312}%
  \BibitemOpen
  \bibfield  {author} {\bibinfo {author} {\bibfnamefont {S.~K.}\ \bibnamefont
  {Choudhary}}, \bibinfo {author} {\bibfnamefont {G.}~\bibnamefont {Kar}},
  \bibinfo {author} {\bibfnamefont {S.}~\bibnamefont {Kunkri}}, \bibinfo
  {author} {\bibfnamefont {R.}~\bibnamefont {Rahaman}}, \ and\ \bibinfo
  {author} {\bibfnamefont {A.}~\bibnamefont {Roy}},\ }\Doi
  {10.1103/PhysRevA.76.062312} {\bibfield  {journal} {\bibinfo  {journal}
  {Phys. Rev. A},\ }\textbf {\bibinfo {volume} {76}},\ \bibinfo {eid} {062312}
  (\bibinfo {year} {2007}{\natexlab{b}})}\BibitemShut {NoStop}%
\bibitem [{\citenamefont {Nielsen}(1999)}]{PhysRevLett.83.436}%
  \BibitemOpen
  \bibfield  {author} {\bibinfo {author} {\bibfnamefont {M.~A.}\ \bibnamefont
  {Nielsen}},\ }\Doi {10.1103/PhysRevLett.83.436} {\bibfield  {journal}
  {\bibinfo  {journal} {Phys. Rev. Lett.},\ }\textbf {\bibinfo {volume} {83}},\
  \bibinfo {pages} {436} (\bibinfo {year} {1999})}\BibitemShut {NoStop}%
\bibitem [{\citenamefont {Walgate}\ \emph {et~al.}(2000)\citenamefont
  {Walgate}, \citenamefont {Short}, \citenamefont {Hardy},\ and\ \citenamefont
  {Vedral}}]{PhysRevLett.85.4972}%
  \BibitemOpen
  \bibfield  {author} {\bibinfo {author} {\bibfnamefont {J.}~\bibnamefont
  {Walgate}}, \bibinfo {author} {\bibfnamefont {A.~J.}\ \bibnamefont {Short}},
  \bibinfo {author} {\bibfnamefont {L.}~\bibnamefont {Hardy}}, \ and\ \bibinfo
  {author} {\bibfnamefont {V.}~\bibnamefont {Vedral}},\ }\Doi
  {10.1103/PhysRevLett.85.4972} {\bibfield  {journal} {\bibinfo  {journal}
  {Phys. Rev. Lett.},\ }\textbf {\bibinfo {volume} {85}},\ \bibinfo {pages}
  {4972} (\bibinfo {year} {2000})}\BibitemShut {NoStop}%
\bibitem [{\citenamefont {Cohen}(2007)}]{PhysRevA.75.052313}%
  \BibitemOpen
  \bibfield  {author} {\bibinfo {author} {\bibfnamefont {S.~M.}\ \bibnamefont
  {Cohen}},\ }\Doi {10.1103/PhysRevA.75.052313} {\bibfield  {journal} {\bibinfo
   {journal} {Phys. Rev. A},\ }\textbf {\bibinfo {volume} {75}},\ \bibinfo
  {eid} {052313} (\bibinfo {year} {2007})}\BibitemShut {NoStop}%
\bibitem [{\citenamefont {Kay}(2006)}]{GheorghiuAlastair}%
  \BibitemOpen
  \bibfield  {author} {\bibinfo {author} {\bibfnamefont {A.}~\bibnamefont
  {Kay}},\ }\href@noop {} {\enquote {\bibinfo {title} {Private
  communication},}\ } (\bibinfo {year} {2006})\BibitemShut {NoStop}%
\bibitem [{\citenamefont {Gheorghiu}\ and\ \citenamefont
  {Griffiths}(2007)}]{PhysRevA.76.032310}%
  \BibitemOpen
  \bibfield  {author} {\bibinfo {author} {\bibfnamefont {V.}~\bibnamefont
  {Gheorghiu}}\ and\ \bibinfo {author} {\bibfnamefont {R.~B.}\ \bibnamefont
  {Griffiths}},\ }\Doi {10.1103/PhysRevA.76.032310} {\bibfield  {journal}
  {\bibinfo  {journal} {Phys. Rev. A},\ }\textbf {\bibinfo {volume} {76}},\
  \bibinfo {pages} {032310} (\bibinfo {year} {2007})}\BibitemShut {NoStop}%
\bibitem [{\citenamefont {Bennett}\ \emph {et~al.}(1999)\citenamefont
  {Bennett}, \citenamefont {DiVincenzo}, \citenamefont {Fuchs}, \citenamefont
  {Mor}, \citenamefont {Rains}, \citenamefont {Shor}, \citenamefont {Smolin},\
  and\ \citenamefont {Wootters}}]{PhysRevA.59.1070}%
  \BibitemOpen
  \bibfield  {author} {\bibinfo {author} {\bibfnamefont {C.~H.}\ \bibnamefont
  {Bennett}}, \bibinfo {author} {\bibfnamefont {D.~P.}\ \bibnamefont
  {DiVincenzo}}, \bibinfo {author} {\bibfnamefont {C.~A.}\ \bibnamefont
  {Fuchs}}, \bibinfo {author} {\bibfnamefont {T.}~\bibnamefont {Mor}}, \bibinfo
  {author} {\bibfnamefont {E.}~\bibnamefont {Rains}}, \bibinfo {author}
  {\bibfnamefont {P.~W.}\ \bibnamefont {Shor}}, \bibinfo {author}
  {\bibfnamefont {J.~A.}\ \bibnamefont {Smolin}}, \ and\ \bibinfo {author}
  {\bibfnamefont {W.~K.}\ \bibnamefont {Wootters}},\ }\Doi
  {10.1103/PhysRevA.59.1070} {\bibfield  {journal} {\bibinfo  {journal} {Phys.
  Rev. A},\ }\textbf {\bibinfo {volume} {59}},\ \bibinfo {pages} {1070}
  (\bibinfo {year} {1999})}\BibitemShut {NoStop}%
\bibitem [{\citenamefont {Nielsen}\ and\ \citenamefont
  {Chuang}(2000)}]{NielsenChuang:QuantumComputation}%
  \BibitemOpen
  \bibfield  {author} {\bibinfo {author} {\bibfnamefont {M.~A.}\ \bibnamefont
  {Nielsen}}\ and\ \bibinfo {author} {\bibfnamefont {I.~L.}\ \bibnamefont
  {Chuang}},\ }\href@noop {} {\emph {\bibinfo {title} {Quantum Computation and
  Quantum Information}}},\ \bibinfo {edition} {5th}\ ed.\ (\bibinfo
  {publisher} {Cambridge University Press},\ \bibinfo {address} {Cambridge},\
  \bibinfo {year} {2000})\BibitemShut {NoStop}%
\bibitem [{\citenamefont {Gour}(2005)}]{PhysRevA.71.012318}%
  \BibitemOpen
  \bibfield  {author} {\bibinfo {author} {\bibfnamefont {G.}~\bibnamefont
  {Gour}},\ }\Doi {10.1103/PhysRevA.71.012318} {\bibfield  {journal} {\bibinfo
  {journal} {Phys. Rev. A},\ }\textbf {\bibinfo {volume} {71}},\ \bibinfo {eid}
  {012318} (\bibinfo {year} {2005})}\BibitemShut {NoStop}%
\bibitem [{\citenamefont {\.Zyczkowski}\ and\ \citenamefont
  {Bengtsson}(2004)}]{OSID.11.3}%
  \BibitemOpen
  \bibfield  {author} {\bibinfo {author} {\bibfnamefont {K.}~\bibnamefont
  {\.Zyczkowski}}\ and\ \bibinfo {author} {\bibfnamefont {I.}~\bibnamefont
  {Bengtsson}},\ }\Doi {10.1023/B:OPSY.0000024753.05661.c2} {\bibfield
  {journal} {\bibinfo  {journal} {Open Syst. Inf. Dyn.},\ }\textbf {\bibinfo
  {volume} {11}},\ \bibinfo {pages} {3} (\bibinfo {year} {2004})},\ \Eprint
  {http://arxiv.org/abs/e-print arXiv:quant-ph/0401119} {e-print
  arXiv:quant-ph/0401119} \BibitemShut {NoStop}%
\bibitem [{\citenamefont {Griffiths}\ \emph {et~al.}(2006)\citenamefont
  {Griffiths}, \citenamefont {Wu}, \citenamefont {Yu},\ and\ \citenamefont
  {Cohen}}]{PhysRevA.73.052309}%
  \BibitemOpen
  \bibfield  {author} {\bibinfo {author} {\bibfnamefont {R.~B.}\ \bibnamefont
  {Griffiths}}, \bibinfo {author} {\bibfnamefont {S.}~\bibnamefont {Wu}},
  \bibinfo {author} {\bibfnamefont {L.}~\bibnamefont {Yu}}, \ and\ \bibinfo
  {author} {\bibfnamefont {S.~M.}\ \bibnamefont {Cohen}},\ }\Doi
  {10.1103/PhysRevA.73.052309} {\bibfield  {journal} {\bibinfo  {journal}
  {Phys. Rev. A},\ }\textbf {\bibinfo {volume} {73}},\ \bibinfo {pages}
  {052309} (\bibinfo {year} {2006})}\BibitemShut {NoStop}%
\bibitem [{\citenamefont {Gheorghiu}\ and\ \citenamefont
  {Griffiths}(2008)}]{PhysRevA.78.020304}%
  \BibitemOpen
  \bibfield  {author} {\bibinfo {author} {\bibfnamefont {V.}~\bibnamefont
  {Gheorghiu}}\ and\ \bibinfo {author} {\bibfnamefont {R.~B.}\ \bibnamefont
  {Griffiths}},\ }\Doi {10.1103/PhysRevA.78.020304} {\bibfield  {journal}
  {\bibinfo  {journal} {Phys. Rev. A},\ }\textbf {\bibinfo {volume} {78}},\
  \bibinfo {eid} {020304} (\bibinfo {year} {2008})}\BibitemShut {NoStop}%
\bibitem [{\citenamefont {Yu}\ \emph {et~al.}(2010)\citenamefont {Yu},
  \citenamefont {Griffiths},\ and\ \citenamefont {Cohen}}]{PhysRevA.81.062315}%
  \BibitemOpen
  \bibfield  {author} {\bibinfo {author} {\bibfnamefont {L.}~\bibnamefont
  {Yu}}, \bibinfo {author} {\bibfnamefont {R.~B.}\ \bibnamefont {Griffiths}}, \
  and\ \bibinfo {author} {\bibfnamefont {S.~M.}\ \bibnamefont {Cohen}},\ }\Doi
  {10.1103/PhysRevA.81.062315} {\bibfield  {journal} {\bibinfo  {journal}
  {Phys. Rev. A},\ }\textbf {\bibinfo {volume} {81}},\ \bibinfo {pages}
  {062315} (\bibinfo {year} {2010})}\BibitemShut {NoStop}%
\bibitem [{\citenamefont {Hamermesh}(1989)}]{Hamermesh:GroupTheory}%
  \BibitemOpen
  \bibfield  {author} {\bibinfo {author} {\bibfnamefont {M.}~\bibnamefont
  {Hamermesh}},\ }\href@noop {} {\emph {\bibinfo {title} {Group Theory and its
  Application to Physical Problems}}}\ (\bibinfo  {publisher} {Dover
  Publications, Inc},\ \bibinfo {year} {1989})\BibitemShut {NoStop}%
\bibitem [{\citenamefont {Michiel}(1995)}]{Hazewinkel:Encyclopaedia}%
  \BibitemOpen
  \bibfield  {author} {\bibinfo {author} {\bibfnamefont {H.}~\bibnamefont
  {Michiel}},\ }\href {http://eom.springer.de/R/r080810.htm} {\emph {\bibinfo
  {title} {Encyclopaedia of Mathematics: Regular Representation. Online
  available at http://eom.springer.de/}}}\ (\bibinfo  {publisher} {Kluwer
  Academic Publishers},\ \bibinfo {address} {The Netherlands},\ \bibinfo {year}
  {1995})\BibitemShut {NoStop}%
\bibitem [{\citenamefont {Ma}(2007)}]{Ma:GroupThPhys}%
  \BibitemOpen
  \bibfield  {author} {\bibinfo {author} {\bibfnamefont {Z.-Q.}\ \bibnamefont
  {Ma}},\ }\href@noop {} {\emph {\bibinfo {title} {Group Theory for
  Physicists}}}\ (\bibinfo  {publisher} {World Scientific Publishing Co. Pte.
  Ltd.},\ \bibinfo {address} {Singapore},\ \bibinfo {year} {2007})\BibitemShut
  {NoStop}%
\bibitem [{\citenamefont {Hardy}\ \emph {et~al.}(1999)\citenamefont {Hardy},
  \citenamefont {Littlewood},\ and\ \citenamefont
  {P{\'o}lya}}]{Hardy:Inequalities}%
  \BibitemOpen
  \bibfield  {author} {\bibinfo {author} {\bibfnamefont {G.~H.}\ \bibnamefont
  {Hardy}}, \bibinfo {author} {\bibfnamefont {J.~E.}\ \bibnamefont
  {Littlewood}}, \ and\ \bibinfo {author} {\bibfnamefont {G.}~\bibnamefont
  {P{\'o}lya}},\ }\href@noop {} {\emph {\bibinfo {title} {Inequalities}}}\
  (\bibinfo  {publisher} {Cambridge University Press},\ \bibinfo {year}
  {1999})\BibitemShut {NoStop}%
\end{thebibliography}
\end{document}